\documentclass[11pt,letterpaper]{article}
\usepackage[margin=1in]{geometry}
\usepackage{fullpage}
\usepackage{float}
\usepackage{amsfonts,amsmath,amssymb,amsthm,graphicx}

\usepackage{amstext}
\usepackage{url}
\usepackage{subfig}
\usepackage{multirow}
\usepackage{caption}
\usepackage{comment}
\usepackage{hyperref}
\usepackage[capitalize]{cleveref}

\usepackage[authoryear,round]{natbib}
\usepackage{enumerate}

\allowdisplaybreaks[1]

\usepackage[linesnumbered,ruled]{algorithm2e}
\SetKwComment{Comment}{/* }{ */}

\newtheorem{theorem}{Theorem}[section]

\newtheorem{lemma}[theorem]{Lemma}
\newtheorem{proposition}[theorem]{Proposition}
\newtheorem{claim}[theorem]{Claim}

\newtheorem{assumption}[theorem]{Assumption}

\newtheorem{example}[theorem]{Example}

\newcommand{\given}{\,|\,}

\newcommand{\prob}[2][]{\text{\bf Pr}\ifthenelse{\not\equal{}{#1}}{_{#1}}{}\!\left[{\def\givenn{\middle|}#2}\right]}
\newcommand{\expect}[2][]{\text{\bf E}\ifthenelse{\not\equal{}{#1}}{_{#1}}{}\!\left[{\def\givenn{\middle|}#2}\right]}

\newcommand{\tparen}{\big}
\newcommand{\tprob}[2][]{\text{\bf Pr}\ifthenelse{\not\equal{}{#1}}{_{#1}}{}\tparen[{\def\given{\tparen|}#2}\tparen]}
\newcommand{\texpect}[2][]{\text{\bf E}\ifthenelse{\not\equal{}{#1}}{_{#1}}{}\tparen[{\def\given{\tparen|}#2}\tparen]}

\newcommand{\sprob}[2][]{\text{\bf Pr}\ifthenelse{\not\equal{}{#1}}{_{#1}}{}[#2]}
\newcommand{\sexpect}[2][]{\text{\bf E}\ifthenelse{\not\equal{}{#1}}{_{#1}}{}[#2]}

\newcommand{\eps}{\epsilon}

\newcommand{\rev}{{\rm Rev}}

\newcommand{\opt}{{\rm OPT}}
\newcommand{\optval}{\rm{OPT\text{-}EF}}

\newcommand{\optefs}{{\rm OPT\text{-} EFS}}

\usepackage{xcolor}

\usepackage{array}
\usepackage{booktabs}

\usepackage{color-edits}
\addauthor{yl}{red}

\newcommand{\NP}{\mbox{\sf NP}}

\begin{document}

\title{Strict Fairness at What Cost? Envy-Free Contracts with Subsidies}
\author{
Matteo Castiglioni\thanks{DEIB, Politecnico di Milano. Email: \texttt{matteo.castiglioni@polimi.it}}
\and Junjie Chen\thanks{Department of Economics, National University of Singapore. Email: \texttt{junjchen9-c@my.cityu.edu.hk}}
\and Yingkai Li\thanks{Department of Economics, National University of Singapore. Email: \texttt{yk.li@nus.edu.sg} }
}
\date{}

\begin{titlepage}
	\clearpage\maketitle
	\thispagestyle{empty}

\begin{abstract}

We study algorithmic fair contract design, where a principal designs task-level contracts and fairly delegates a set of tasks to a set of agents. 
Prior work reveals a fairness-revenue dilemma: exact envy-free (EF) contracts may have an unbounded price of fairness (PoF), while approximate notions avoid this unboundedness only by weakening strict fairness.
To address this dilemma, we propose a novel scheme, called {\it Envy-free Contracts with Subsidies} (EFS), in which the principal may additionally offer agent-specific subsidies to restore strict fairness.
Our main technical result is a tight characterization of the price of fairness for EFS contracts. In sharp contrast to EF contracts, whose PoF can be unbounded, we show that the PoF of EFS contracts is $n^{n+O(1)}$, where $n$ is the number of agents.
Moreover, EFS contracts can outperform EF contracts by an arbitrarily large factor in terms of the principal's revenue.
Finally, we present the complexity landscape: computing optimal EFS contracts is NP-hard in general, whereas a polynomial-time algorithm exists when the number of tasks is constant.

\end{abstract}

\end{titlepage}

\section{Introduction}

Contract design traditionally focuses on incentive provision and revenue maximization. In multi-agent settings, however, fairness among agents may also affect their willingness to exert effort~\citep{akerlof1990fair}. This concern is especially salient in delegated-work environments, where outcomes are observable but effort is not, and where heterogeneous agents may face different costs, opportunities, or task difficulties. As a result, a revenue-maximizing contract may induce task assignments and payments that some agents perceive as unfair.

Motivated by this issue, \citet{castiglioni2025faircontracts} recently initiated the algorithmic study of fair contracts in a multi-agent, multi-task delegation model. In their framework, a principal allocates tasks and designs task-level contracts while accounting for both performance and fairness. The central fairness notion is envy-freeness (EF): each agent must weakly prefer its own assigned tasks and associated contracts to those assigned to any other agent. 

Prior work on fair contract design reveals a fundamental tension between fairness and the principal's revenue. Exact envy-freeness provides a strong fairness guarantee, but it can be prohibitively costly for the principal, potentially causing an arbitrarily large revenue loss.
\citet{castiglioni2025faircontracts} show that this issue can be mitigated by relaxing EF to approximate notions such as EF1 (envy-free up to one item) and $\eps$-EF. 
These relaxations are meaningful because they control the extent of unfairness: envy is permitted only after removing a task from the envied bundle, or only up to an $\eps$ amount. Nevertheless, they weaken the fairness guarantee itself. Some envy may remain, even if only to a limited degree, and therefore these notions do not provide exact envy-freeness.

Such a tension leads to a fundamental question in fair contract design:
\begin{center} {\it Can we restore strict envy-freeness among agents while avoiding the unbounded revenue loss to the principal?}\end{center}
This paper gives an affirmative answer to this question. 
Instead of weakening the fairness requirement, the principal may supplement task-level contracts with compensatory transfers (i.e., subsidy). 
Conceptually, the task-level contract remains responsible for inducing effort, while the subsidies correct distributional imbalances among agents. 
{At first glance, allowing subsidies may seem to trivialize fairness: principals can compensate the disadvantaged agents and easily find an envy-free allocation as~\citep{halpern2019fair}. However, subsidies are costly to the principal and interact with revenue maximization, incentive constraints, and task allocation. Thus, the key question is not whether envy can be eliminated, but how costly it is and whether the optimal contracts can be computed efficiently. Moreover, our problem is not equivalent to subsidy minimization as ~\citep{halpern2019fair}, since an EF contract (with {\it zero} subsidies) can be easily found  but may cause a large revenue loss.}

Subsidies provide a separate instrument for restoring exact fairness, and this separation is especially natural in platform work and crowdwork, where contracts are posted at the task level and compensatory transfers are used to correct unequal realized pay. 
In platform work, earnings guarantees and top-ups, such as Uber's Proposition 22 materials~\citep{uberprop22} and Lyft's earnings commitment~\citep{lyft2024earnings}, illustrate how a platform can preserve assignment and incentive mechanisms while using compensatory transfers to address unequal earnings.
Crowdwork provides an even closer task-level example: requesters post base payments, while bonus mechanisms can correct low realized hourly wages. The Fair Work system of \citet{whiting2019fairwork}, for instance, automatically pays bonuses so that workers reach a target hourly wage, against a backdrop of documented low and uneven realized hourly wages on Amazon Mechanical Turk~\citep{hara2018datadriven}.


Motivated by this perspective, we propose {\it Envy-free Contracts with Subsidies} (EFS), a fair contract scheme in which the principal chooses task assignments, task-level contracts, and individually tailored nonnegative subsidies. Our contributions are summarized as follows:

\begin{itemize}
    \item We show that the optimal principal's revenue of our EFS contracts can outperform that of EF contracts by an arbitrarily large factor. Hence, subsidies are not cosmetic, but they fundamentally improve the revenue-fairness tradeoff. This separation comes from instances where exact EF forces the principal to recruit less-effective agents, while subsidies compensate envy and allow the principal to contract with productive agents.
\item Our main technical contribution is a tight characterization of the price of restoring strict envy-freeness with subsidies. In contrast to EF contracts, whose price of fairness can be unbounded, EFS contracts preserve strict envy-freeness among agents while admitting a bounded worst-case price of fairness. Specifically, we prove an upper bound of $O((n+1)^n)$ and complement it with a lower-bound construction of $\Omega(n^{n-1})$. 
These bounds differ only by a polynomial factor, and therefore yield an $n^{n+O(1)}$ characterization of the price of fairness for EFS contracts.
A key ingredient in our analysis is a characterization of optimal single-task EFS contracts. This characterization allows us to reduce the main price-of-fairness analysis to the single-task setting. For the upper bound, we first prove the guarantee for a single task and then extend it to multiple tasks using the additivity of the principal's revenue. For the lower bound, the characterization guides a matching-in-the-exponent construction whose instance consists of a single task.
    \item We show that computing an optimal EFS contract in general is \NP-hard. This is proved via a non-trivial reduction from the $3$-Partition problem. Nevertheless, when the number of tasks is a constant, we present a polynomial-time algorithm to compute the optimal EFS contract. Moreover, we establish an algorithmic connection to EF contracts by showing that one can reduce any EFS contract instance to an EF contract instance in polynomial time. 
\end{itemize}

\subsection{Related Work}
Our work is related to the literature on algorithmic contracts and fair division.

\paragraph{Related work on algorithmic fair contracts.} The closest work to ours is by \citet{castiglioni2025faircontracts}, who recently initiated the algorithmic study of fair contract design and introduced EF, EF1, and $\epsilon$-EF contracts in a multi-agent, multi-task model. Their results reveal a central tension: exact envy-freeness can be prohibitively costly, while approximate notions improve revenue by weakening the fairness guarantee. Our work takes this tension as the starting point and studies subsidies as a compensatory instrument for restoring exact envy-freeness.
Subsequent work studies fair contracts in a single-task, multi-agent collaborating setting, where envy-freeness and non-discrimination are employed as fairness constraints.~\citep{castiglioni2025fair,feldman2026equal,ding2026multi}.

\paragraph{Related work on algorithmic contracts.} 
Contract design has recently received increasing attention in the computer science community. To the best of our knowledge, the algorithmic study of contract design dates back to~\citet{babaioff2006combinatorial}. A growing line of work has investigated Bayesian contract design under adverse selection and moral hazard. \citet{alon2021contracts,alon2022bayesian} investigate settings with single-dimensional agent types, whereas \cite{castiglioni2021bayesian,castiglioni2022designing,guruganesh2021contracts} present the complexity of computing Bayesian contract design with multi-dimensional agent types. More recently, \citet{castiglioni2025reduction} establish a fundamental algorithmic equivalence between single-dimensional and multi-dimensional Bayesian contract design. We refer readers to the recent survey by~\citet{dutting2024algorithmic} and ~\citet{feldman2025combinatorial}.

Our work is also closely related to the recent literature on combinatorial contract design~\citep{dutting2024combinatorial,dutting2023multi,dutting2025multi}. The closest one to ours is by \citet{alon2025multiprojectcontracts}, who study assigning multiple agents to a set of tasks, where multiple agents may collaborate on a single task. In contrast, our model assigns multiple tasks to agents, where one task is assigned to at most one agent, and each agent can work on multiple tasks.

\paragraph{Related work on fair division.} The paper relates to the literature on fair division of goods and chores. One typical fairness notion employed is envy-freeness, which has long been a central and extensively studied fairness notion since~\citep{foley1967resource}. Several prominent relaxations and extensions of envy-freeness have been proposed, including envy-free up to one item (EF1)~\citep{lipton2004approximately,budish2011combinatorial} and envy-freeness up to any
item (EFX)~\citep{caragiannis2019unreasonable}. A closely related recent line of work studies fair division with subsidies~\citep[e.g.,][]{halpern2019fair,elmalem2025whoever,wu2025little,caragiannis2021computing}. {These works use monetary transfers to eliminate or reduce envy in allocation problems. In contrast, our setting combines subsidies with moral-hazard contract design: subsidies interact with task allocations, agents' incentives and the principal's revenue maximization. Thus, our main question is not only whether envy can be eliminated, but also the revenue cost of restoring strict envy-freeness through subsidies.}

\section{Model}

We consider a contract design model with moral hazard. Specifically, there is a finite collection of tasks $\mathcal M$, and the principal must delegate the tasks to a finite set of agents $\mathcal N$. Let $|\mathcal M|=m$ and $|\mathcal N|=n$. 
In our model, the agents and the tasks are heterogeneous: if agent $i\in\mathcal N$ exerts effort on task $j$, the task succeeds with probability $p_{i,j}\ge 0$, and agent $i$ incurs an effort cost of $c_{i,j}\ge 0$.
Alternatively, if the agent chooses to shirk, the task succeeds with a probability of zero, and the agent incurs no cost. 
Finally, if task $j\in\mathcal M$ is successful, it generates a reward $r_j\ge 0$ for the principal. Without loss of generality, we assume the reward $r_j \in[0, 1]$ for every task $j\in \mathcal{M}$.

The principal observes task outcomes but not the agents' effort
choices. To induce effort, the principal specifies outcome-contingent
payments. Following \citep{castiglioni2025faircontracts}, we restrict attention to task-level linear contracts. For each task $j$, the principal chooses a share $\alpha_j\in[0,1]$; if task
$j$ succeeds, the agent assigned to this task receives payment $\alpha_j r_j$, while no payment is made upon failure. Thus, when agent $i$ is assigned task $j$ and exerts effort, its expected utility
from this task is $\alpha_j p_{i,j}r_j-c_{i,j}$.

A task allocation is denoted by $S=(S_1,\ldots,S_n)$, where $S_i\subseteq\mathcal M$ is the set of
tasks assigned to agent $i$. We impose that tasks are not duplicated: $S_i\cap S_{i'}=\emptyset$ for all $i\neq i'$. 
In addition to choosing the allocation $S$ and the task-level shares
$\alpha$, the principal may provide each agent with a nonnegative
subsidy. Let $s_i\ge 0$ denote the subsidy given to agent $i$.
A contract with subsidies is a triple $(S,\alpha,s)$, where
$s=(s_1,\ldots,s_n)$.

 Whenever task $j$ is assigned to agent $i$ by the principal,
the contract must make effort weakly optimal for that agent. Formally, we impose the following effort constraints,
\begin{equation}
\tag{Effort constraints}
\label{effort-constraints}
    \alpha_jp_{i,j}r_j-c_{i,j}\ge 0,
    \quad \forall i\in\mathcal N,\ \forall j\in S_i .
\end{equation}
These constraints ensure that every assigned agent is willing to exert effort on every assigned task.

Given any bundle of tasks, an agent evaluates that bundle using its
own success probabilities and costs. Recall that an agent can always shirk on
a task that would yield negative utility. By~\eqref{effort-constraints}, agent $i$'s utility from its own
assigned bundle $S_i$ and subsidy $s_i$ is
    $\sum_{k\in S_i}(\alpha_kp_{i,k}r_k-c_{i,k})+s_i$, while its utility from other agent $j$'s assigned bundle and subsidy is $\sum_{k\in S_j}
    \max\{\alpha_kp_{i,k}r_k-c_{i,k},0\}+s_j$. 

We say that $(S,\alpha,s)$ is an {\it Envy-free Contract with Subsidies}
(EFS contract) if no agent prefers another agent's bundle and subsidy
to its own. That is, for every ordered pair of distinct agents
$i,j\in\mathcal N$,
\begin{equation}
\tag{EFS constraints}
\label{efsconstraints}
    \sum_{k\in S_i}(\alpha_kp_{i,k}r_k-c_{i,k})+s_i
    \ge
    \sum_{k\in S_j}
    \max\{\alpha_kp_{i,k}r_k-c_{i,k},0\}+s_j .
\end{equation}
The subsidy terms enter the fairness comparison directly:
an agent compares not only the assigned task-contract bundle, but also
the monetary compensation attached to that bundle.
Note that the standard envy-free (EF) contract model is recovered as the special case
where subsidies 
$s_i=0$ for all $i\in\mathcal N$, which is formulated as 
\begin{equation}
\tag{EF constraints}
\label{efconstraints}
    \sum_{k\in S_i}(\alpha_kp_{i,k}r_k-c_{i,k})
    \ge
    \sum_{k\in S_j}
    \max\{\alpha_kp_{i,k}r_k-c_{i,k},0\},
     \forall i\neq j .
\end{equation}

Following \citep{castiglioni2025faircontracts}, we focus
on full-allocation contracts, meaning that every task must be assigned to some agent. 
An EFS contract $(S,\alpha, s)$ is full-allocation if
\begin{equation}\label{allocation-constraints}\tag{Allocation constraints}
    \bigcup_{i\in\mathcal N}S_i=\mathcal M,~
    S_i\cap S_{i'}=\emptyset, \quad\forall i\neq i',
\end{equation}
Following \citep{castiglioni2025faircontracts}, we maintain the following mild feasibility assumption.
\begin{assumption}
\label[assumption]{asp:usefulness}
For every task $j\in\mathcal M$, there exists an agent $i\in\mathcal N$
such that $p_{i,j}r_j-c_{i,j}\ge 0 .$
\end{assumption}
\cref{asp:usefulness} says that every task is useful for at least one agent
under the full-transfer contract $\alpha_j=1$. Under this assumption, a full-allocation EFS contract always exists.

\begin{proposition}\label[proposition]{propo_existend_efs}
Under \cref{asp:usefulness}, there exists a full-allocation EFS
contract, and one can be found in polynomial time.
\end{proposition}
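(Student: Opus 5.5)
Construct the contract explicitly: fix any allocation that respects \cref{asp:usefulness}, choose shares that satisfy the \eqref{effort-constraints}, and then compute subsidies that repair all envy. For each task $j\in\mathcal M$, pick an agent $i(j)$ with $p_{i(j),j}r_j-c_{i(j),j}\ge 0$; \cref{asp:usefulness} guarantees one exists, and finding it takes $O(n)$ time per task. Let $S_i=\{j: i(j)=i\}$ and set $\alpha_j=1$ for every $j$. By construction the \eqref{allocation-constraints} hold, and each \eqref{effort-constraints} reads $p_{i,j}r_j-c_{i,j}\ge 0$, which is exactly how $i(j)$ was chosen.

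With $S$ and $\alpha$ fixed, define
\[
u_i=\sum_{k\in S_i}(p_{i,k}r_k-c_{i,k})\ge 0
\]
as agent $i$'s own utility, and
\[
v_{i,j}=\sum_{k\in S_j}\max\{p_{i,k}r_k-c_{i,k},0\}
\]
as $i$'s valuation of $j$'s bundle. Every $v_{i,j}\ge 0$, and all these quantities are computable in $O(nm)$ time. The \eqref{efsconstraints} become $s_i-s_j\ge v_{i,j}-u_i$ for all $i\ne j$, a system of difference constraints in $s$. The simplest way to satisfy it is to let $B=\max_{i\ne j} v_{i,j}$ and give subsidies so that every agent's total payoff is equalized at a high level: set $s_i=B-u_i$. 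Then agent $i$'s payoff is $u_i+s_i=B$, while the right-hand side of the constraint is $v_{i,j}+s_j=v_{i,j}+B-u_j$. This choice fails when $u_j$ is small, so this naive choice needs adjusting.

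A cleaner choice is to use a large common offset. Let $C=\max_{i\ne j}v_{i,j}+\max_i u_i$ and set $s_i=C-u_i\ge 0$. The left side of the constraint is again $C$, and the right side is $v_{i,j}+C-u_j$, so we still need $u_j\ge v_{i,j}$, which is false in general. A constant-payoff scheme therefore does not work by itself, and I would instead treat the system as difference constraints directly. Build a complete digraph with an edge $j\to i$ of weight $w_{i,j}=u_i-v_{i,j}$ encoding $s_j\le s_i+w_{i,j}$. A feasible $s$ exists iff there is no negative cycle, and then shortest-path potentials (Bellman--Ford from a super-source) give $s$, shifted by a constant to be nonnegative, in polynomial time.

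The main obstacle is therefore ruling out negative cycles, i.e.\ showing that for every cycle $i_1\to i_2\to\cdots\to i_1$ of envy comparisons, $\sum_t u_{i_t}\ge\sum_t v_{i_t,i_{t+1}}$. This is the classic condition of \citet{halpern2019fair}: the allocation must maximize welfare over permutations of bundles. A cycle violating it means that rotating the bundles along the cycle raises total utility. I would enforce the condition in the following way. Rotating bundles preserves full allocation, and the recipient of a bundle values it positively only via the tasks with nonnegative surplus. So I would repeatedly permute bundles along any negative cycle, and on each bundle received through a rotation, reassign tasks on which the new owner has negative surplus to their original useful agent $i(j)$.

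It is simpler to choose the allocation up front as a welfare-maximizing bundle assignment. Take $i(j)\in\arg\max_i(p_{i,j}r_j-c_{i,j})$, which is nonnegative by \cref{asp:usefulness}. Then for any cycle, each task $k$ in $S_{i_{t+1}}$ satisfies $\max\{p_{i_t,k}r_k-c_{i_t,k},0\}\le p_{i_{t+1},k}r_k-c_{i_{t+1},k}$, because the assigned agent has the maximal surplus and that surplus is nonnegative. Hence $v_{i_t,i_{t+1}}\le u_{i_{t+1}}$ termwise, summing around the cycle gives no negative cycle, and Bellman--Ford yields nonnegative subsidies $s$, completing the polynomial-time construction of a full-allocation EFS contract.
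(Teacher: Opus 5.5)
Your proof is correct, but it takes a different route from the paper. The paper uses no subsidies at all. It sets $s=0$, gives each task $j$ the ``minimum wage'' share $\alpha_j=\min_i c_{i,j}/(p_{i,j}r_j)$ (with $0/0=0$; this is at most $1$ by \cref{asp:usefulness}), and assigns $j$ to a minimizer. The assignee's utility on $j$ is then exactly $0$, and every other agent's utility on $j$ is nonpositive. So every agent values its own bundle and every other bundle at $0$, and the EF constraints hold trivially. That is a two-line check, it proves the stronger statement that an EF contract exists, and it gives the principal nonnegative revenue.

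Your route instead fixes full-transfer shares $\alpha_j=1$ and picks the surplus-maximizing allocation (exactly the allocation \cref{lemma:min_sub} prescribes at $\alpha_j=1$). You then remove envy with subsidies in the style of \citet{halpern2019fair}. This is self-contained and exercises the subsidy mechanism, but it is heavier, and the resulting contract has revenue $-\sum_i s_i\le 0$.

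You can also shorten your own argument considerably. Your last paragraph proves the termwise bound $v_{i,j}\le u_j$ for all $i\neq j$. That is precisely the condition you needed for the equal-payoff subsidies $s_i=\max_{i'}u_{i'}-u_i\ge 0$. You discarded that scheme as failing ``in general'' only because the allocation was still arbitrary at that point. With the surplus-maximizing allocation, agent $i$'s payoff is $\max_{i'}u_{i'}$, and its value for $j$'s bundle plus subsidy is $v_{i,j}+\max_{i'}u_{i'}-u_j\le\max_{i'}u_{i'}$. So both Bellman--Ford and the bundle-rotation detour can be dropped.

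If you keep the difference-constraint formulation, note one orientation issue. Under the usual shortest-path convention, $s_j\le s_i+w_{i,j}$ corresponds to an edge $i\to j$. With your orientation $j\to i$ the potentials must be negated, which is harmless since cycle weights are unchanged.
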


\paragraph{The principal's optimization problem.}
For a feasible EFS contract $(S,\alpha,s)$, the principal's expected
revenue is
\[
    \rev(S,\alpha,s)
    =
    \sum_{i\in\mathcal N}\sum_{k\in S_i}
    (1-\alpha_k)p_{i,k}r_k
    -
    \sum_{i\in\mathcal N}s_i .
\]
The first term is the expected reward retained by the principal after
success-contingent payments, and the second term is the total subsidy
paid to the agents. The principal chooses the allocation, the linear
shares, and the subsidies to maximize this revenue subject to effort
feasibility, full allocation, and envy-freeness with subsidies:
\begin{align}
\tag{EFS-Program}
\label{efs_program}
    \max_{S,\alpha,s}\quad
    & \rev(S,\alpha,s) \nonumber\\
    \text{s.t.}\quad
    & \text{\eqref{effort-constraints}}, \text{\eqref{efsconstraints}} \nonumber\\
    & \text{\eqref{allocation-constraints}} \nonumber\\
    & s_i\ge 0,
      \qquad \forall i\in\mathcal N, \nonumber
\end{align}
We denote the optimal value of this program by $\optefs$. We also let
$\optval$ denote the optimal revenue under the standard EF constraints, namely the special case in which all subsidies are fixed to zero. Moreover, we define $\frac{0}{0} = 0$ and $\frac{c}{0}=+\infty$ for any $c>0$.

\section{EFS Contracts can be Arbitrarily Better than EF Contracts}

This section presents the advantages of EFS contracts. \citet{castiglioni2025faircontracts} shows that with exact EF fairness constraints, the principal needs to sacrifice an arbitrarily large amount of revenue to create envy-freeness among agents. They further show that relaxing exact EF constraints to EF1 and $\eps$-EF constraints can improve the revenue, but these relaxed notions indeed introduce a degree of {\it unfairness} among agents. In sharp contrast, our EFS contracts address both issues simultaneously: (i) EFS contracts can restore strict fairness, unlike EF1 and $\eps$-EF; (ii) It strictly improves the principal's revenue over EF contracts. Indeed, \cref{efs_over_ef} shows that EFS contracts can outperform EF contracts by an arbitrarily large factor in terms of the principal's revenue.

\begin{proposition}\label[proposition]{efs_over_ef}
    There exists an instance in which the ratio between the revenues of optimal EFS contracts and optimal EF contracts is arbitrarily large.
\end{proposition}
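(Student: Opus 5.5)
The plan is to give an explicit parametric family of instances with a small number of agents and tasks, in which $\optefs$ stays bounded below by a constant while $\optval \to 0$ as a parameter $\eps \to 0$. The mechanism I will build in is the one described in the introduction. The productive agent can only be used at a low share $\alpha$. But some necessarily high-share task must go to another agent, and under exact EF that task creates envy which can only be neutralised by raising the productive agent's share to nearly $1$, or by not using the productive agent at all. With subsidies, the principal can instead pay the envious agent a small transfer of order $\eps$ and keep the revenue-rich allocation.

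\textbf{Step 1: the instance.} I will use two tasks. The \emph{valuable} task $a$ has $r_a = 1$. A productive agent (agent $1$) does it with $p_{1,a}=1$, $c_{1,a}=0$. The remaining agents have $p_{i,a}=\eps$ and $c_{i,a}=0$, so that under any share $\alpha_a$ they value task $a$ at $\alpha_a\eps$. The \emph{forced} task $b$ has small reward $r_b=\eps$. Assumption~\ref{asp:usefulness} will hold only through an agent $2$ with $p_{2,b}=1$ and $c_{2,b}=\eps$, so that assigning $b$ to agent $2$ forces $\alpha_b = 1$ and gives agent $2$ zero utility from $b$. For agent $1$, task $b$ must be infeasible to perform (large cost $c_{1,b}$). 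I will add a third agent to create the envy: agent $3$ cannot perform $b$ with nonnegative utility at the share needed by others, but at $\alpha_b=1$ it values $b$ at some positive amount $v$ of order $\eps$ (e.g.\ $p_{3,b}=1$, $c_{3,b}$ slightly below $\eps$, while $c_{3,b} > 0$ blocks cheap assignment). The parameters will be tuned so that $b$ must go to agent $2$ (or to agent $3$ at $\alpha_b$ close to $1$), and agent $3$, holding nothing or only a worthless bundle, envies $b$'s holder by $\Theta(\eps)$.

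\textbf{Step 2: lower bound on $\optefs$.} I will exhibit a feasible EFS contract: $S_1=\{a\}$ with $\alpha_a=0$, $S_2=\{b\}$ with $\alpha_b=1$, $S_3=\emptyset$. The subsidies $s_3 = v$ and $s_1=s_2=v$ are set by equalising, i.e.\ the minimal vector satisfying \eqref{efsconstraints}, in the spirit of \citet{halpern2019fair}. A direct check of all ordered pairs then gives revenue $1 - O(\eps)$.

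\textbf{Step 3: upper bound on $\optval$.} I will enumerate the finitely many allocations of $\{a,b\}$. For each, I will show that \eqref{efconstraints} together with \eqref{effort-constraints} force either $\alpha_a \ge 1 - O(\eps)$ or that $a$ is held by an $\eps$-productive agent, so in either case the revenue is $O(\eps)$. The key chain is as follows. The envy of agent $3$ toward $b$'s holder requires $u_3 \ge v > 0$. This forces agent $3$ to hold $a$ with share $\alpha_a\eps \ge v$, or forces $b$'s holder to be agent $3$ itself at a share that propagates envy to agent $1$. That in turn forces agent $1$'s share on $a$ up, and so on.

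\textbf{Main obstacle.} The delicate step is Step 3: making sure that \emph{every} allocation is bad under exact EF. This includes the allocation in which the productive agent simply takes everything at share $0$, and the one in which the envious agent is given $b$. I have to choose costs so that the max-with-zero valuations of others' bundles create unavoidable envy while keeping each task effort-feasible for the intended holder. I will first try the three-agent family above and verify each allocation case by case. If some allocation escapes, I will add symmetric copies of the envious agent, so that no single holder of $a$ can absorb all envy.
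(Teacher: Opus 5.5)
There is a genuine gap: your instance fails on both sides, and the failure is structural, so tuning the parameters will not fix it. \textbf{Step 2 is infeasible.} With $S_2=\{b\}$ at $\alpha_b=1$ and $S_3=\emptyset$, agent $2$'s own utility is $\eps-\eps=0$, while agent $3$ values $b$ at $v>0$. For this pair, \eqref{efsconstraints} reads $s_3\ge v+s_2$ and $s_2\ge s_3$, and summing gives $0\ge v$. Setting all subsidies equal to $v$ cannot help, because only the differences $s_i-s_j$ enter the constraints. No choice of $\alpha_a$ helps either, since neither bundle contains $a$. This is exactly the content of Lemma~\ref{lemma:min_sub}, and for bundles the no-positive-envy-cycle condition of \citet{halpern2019fair}. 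A subsidy paid to an envious agent must be financed by the holder's own surplus. A holder kept at zero utility, like your agent $2$ (or agent $1$ at $\alpha_a=0$), can never fund such a transfer.

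\textbf{Step 3 is false.} Since $c_{3,b}<\eps$, agent $3$ has a lower wage than agent $2$ on $b$. Consider the contract $S_1=\{a\}$ with $\alpha_a=0$, $S_3=\{b\}$ with $\alpha_b=c_{3,b}/\eps$, and $S_2=\emptyset$. It is EF: every agent has zero own utility, agent $2$ values $b$ at $\max\{\alpha_b\eps-\eps,0\}=0$, agents $2$ and $3$ value $a$ at $0$, and agent $1$ values $b$ at $0$. Hence $\optval\ge 1$, while $\optefs\le\opt\le 1+\eps$. If instead $c_{3,b}\ge\eps$, then $v=0$ and there is no envy to exploit. More generally, the minimum-wage assignment in the proof of Proposition~\ref{propo_existend_efs} is always EF. Because your productive agent has cost $0$ on $a$, it is a minimum-wage agent there, so exact EF already extracts the full welfare of $a$, and adding copies of agent $3$ changes nothing.

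The missing ingredient is this: on the task carrying the revenue, the productive agent must have a strictly higher wage $c/(pr)$ than some low-productivity agent. Then at every share where the productive agent exerts effort, the low-productivity agent strictly values the task, so EF forces the task onto the low-productivity agent. Under EFS, the principal instead raises the share moderately until the productive agent's utility dominates, as Lemma~\ref{lemma:min_sub} requires, and pays the other agent a subsidy of order $\delta$. This is the paper's single-task, two-agent Example~\ref{example52} ($p_1=10\delta$, $c_1=\delta$ versus $p_2=\frac12$, $c_2=\frac14$). Its optimal EF revenue is $9\delta$, while the EFS contract exhibited there earns $\frac15-5\delta$.
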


We prove  \cref{efs_over_ef} by the following example, which contains only two agents and a single task. One agent gives a high revenue while another agent is much less efficient.  Under EF contracts, the principal must assign the task to a less efficient agent and receives negligible revenue. However, under EFS contracts, a small subsidy to the less efficient agent allows the principal to assign the task to the efficient agent
and obtain much larger revenue.

\begin{example}\label{example52}
Consider the example in \citep{castiglioni2025faircontracts} parameterized by a constant  $\delta$$>0$. Suppose there is one task with reward $r=1$ and two agents.  Set the success probabilities and costs to $p_{1} = 10 \delta$, $c_1 =  \delta$ and $p_{2} = \frac{1}{2}$, $c_2 = \frac{1}{4}$. 
To achieve envy-freeness without subsidies, the principal must assign the task to agent $1$, and to maximize the revenue, the principal needs to offer a linear contract $\frac{c_1}{p_1}=\frac{1}{10}$ and gains a revenue of $(1-\frac{1}{10}) \cdot 10 \delta = 9 \delta$.

In contrast, EFS contracts achieve a much better revenue. Let the subsidies be $s_1 = 5\delta$ and $s_2 = 0$. The contract is set as $\alpha=\frac{3}{5}$. By assigning the task to agent $2$, we have EFS constraint for agent $1$ as
$5\delta = s_1 \ge (\alpha p_1 r  - c_1) +s_2 = 5\delta$,
and for agent $2$ as
$
\frac{1}{20}=(\alpha p_2  - c_2) +s_2  \ge s_1 = 5\delta$.
The principal gains a revenue of $\frac{1}{5}-5\delta$. The ratio $\frac{\frac{1}{5}-5\delta}{9\delta} \to \infty$ as $\delta\to 0$.
\end{example}

\section{Price of Fairness}

We now study the price of fairness (PoF), defined as the ratio between the optimal revenue without fairness
constraints and the optimal EFS revenue, to quantify the effect of EFS constraints on the principal's revenue:
\[\rm PoF =\frac{\opt}{\optefs}
\]
where $\opt$ is the optimal revenue without fairness constraints.
Although 
\cref{efs_over_ef} shows that EFS contracts could be arbitrarily better than EF contracts, it remains to understand how much revenue can be lost relative to the unconstrained optimum. \cref{main_theorem_pof} completely characterizes the price of fairness for EFS contracts, in contrast to the unbounded PoF in EF contracts. 
\begin{theorem}\label{main_theorem_pof}
    The Price of Fairness of EFS contracts is ${\rm PoF} = n^{n+O(1)}$.
\end{theorem}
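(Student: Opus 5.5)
The plan is to prove matching upper and lower bounds, $\mathrm{PoF}=O((n+1)^n)$ and $\mathrm{PoF}=\Omega(n^{n-1})$. Both rest on a characterization of optimal single-task EFS contracts.

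\emph{Reduction to a single task.} The \eqref{efsconstraints} are additive across tasks. Agent $i$'s value for its own bundle is $\sum_{k\in S_i}(\alpha_kp_{i,k}r_k-c_{i,k})$, and its value for another agent's bundle is $\sum_{k}\max\{\alpha_kp_{i,k}r_k-c_{i,k},0\}$. Both are sums of per-task terms, and subsidies enter linearly. Suppose that for each task $j$ we have a single-task EFS contract $(S^j,\alpha_j,s^j)$ on the same agent set. Assigning each task according to its own contract and paying subsidies $s_i=\sum_j s^j_i$ then satisfies every EFS constraint, because each constraint is the sum of the per-task ones. Revenue is also additive, and $\opt=\sum_j\max_i(p_{i,j}r_j-c_{i,j})^+$ is additive too. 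Hence a per-task guarantee $\optefs_j\ge \opt_j/(n+1)^n$ transfers to the whole instance.

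\emph{Structure of single-task EFS contracts.} Fix one task with reward $r$, and suppose it is given to agent $h$ at share $\alpha\ge c_h/(p_hr)$. Every non-holder $k$ must satisfy $s_k\ge \max\{\alpha p_kr-c_k,0\}+s_h$. Two non-holders compare subsidies only, so all non-holders receive a common subsidy $s$, and without loss of generality $s_h=0$. Consequently an optimal contract takes
\[
s=E(h,\alpha):=\max_{k\neq h}\,(\alpha p_kr-c_k)^+ ,
\]
subject to the holder constraint $\alpha p_hr-c_h\ge s$. Its revenue is
\[
(1-\alpha)p_hr-(n-1)E(h,\alpha).
\]
So the single-task problem reduces to choosing an agent $h$ and a share $\alpha$, where each unit of envy costs a factor $n-1$.

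\emph{Upper bound via a chain argument.} Let $i_1$ be the agent maximizing $p_ir-c_i$, so that $\opt=p_{i_1}r-c_{i_1}$. I would try $\alpha$ just above the threshold $c_{i_1}/(p_{i_1}r)$.
\begin{itemize}
\item If the resulting envy is small, say $(n-1)E\le \opt/(n+1)$ roughly, then this contract already gives a constant fraction of $\opt/(n+1)$.
\item Otherwise some agent $i_2$ has large $\alpha p_{i_2}r-c_{i_2}$ at that share. Since $\alpha<1$, this forces $p_{i_2}r$ to be large, and $i_2$ can only envy if its threshold is below $\alpha$. Assigning to $i_2$ at its own lower threshold then yields revenue at least a $1/(n+1)$ fraction of the previous benchmark.
\end{itemize}
Iterating produces a chain $i_1,i_2,\dots$ of distinct agents with strictly decreasing thresholds. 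The chain has length at most $n$, and each step loses a factor $n+1$, which gives $(n+1)^n$. I expect the main obstacle to be the accounting: choosing the share at each step (perhaps a point strictly between consecutive thresholds) so that either the envy term is dominated or the next agent's revenue certifies a $1/(n+1)$ fraction.

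\emph{Lower bound.} Guided by the characterization, I would build a single-task instance that makes this chain tight. Take agents whose thresholds are $\theta_1>\theta_2>\dots>\theta_n$ and whose $p_kr$ grow geometrically by factors of about $n$. The parameters should be tuned so that assigning to agent $k$ at any share $\alpha\ge\theta_k$ makes agent $k+1$'s envy, multiplied by $n-1$, eat all but a $1/n$ fraction of the revenue. Meanwhile the full-information optimum is attained by the top agent. Evaluating $\max_{h,\alpha}\bigl[(1-\alpha)p_hr-(n-1)E(h,\alpha)\bigr]$ on this instance should then give $\optefs\le \opt/\Omega(n^{n-1})$. Combining the two bounds yields $\mathrm{PoF}=n^{n+O(1)}$, as claimed in \cref{main_theorem_pof}.
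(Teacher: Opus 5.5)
Your architecture is the paper's: the single-task structure of \cref{lemma:min_sub}, additivity across tasks, a chain argument for the upper bound, and a geometric single-task instance for the lower bound. The first two ingredients are correct. The gap is in the upper-bound chain, which is the heart of the theorem, and the shares you propose there do not work. Your own structural step shows that giving the task to $h$ at share $\alpha$ is feasible only if $U_h(\alpha)=\alpha p_hr-c_h\ge E(h,\alpha)$, i.e.\ $h$ must be a top-utility agent at $\alpha$. Just above $h$'s threshold, $U_h(\alpha)\approx 0$, so feasibility forces $E\approx 0$. Consequently the case ``$(n-1)E\le\opt/(n+1)$'' does not certify a feasible contract. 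Likewise, ``assigning to $i_2$ at its own lower threshold'' is infeasible as soon as some third agent has positive utility there. Without the right case split there is no invariant guaranteeing that each step loses only a factor $n+1$. This is exactly the accounting you leave open.

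The missing idea (Algorithm~\ref{algorithm_appr_EFS}) is to lower the share only until the current candidate keeps a $1/(n+1)$ fraction of its current benchmark, and then to split on whether it is still top there. Let $\alpha^0=1$, $i^1=\arg\max_i W_i$, and $B_k=U_{i^k}(\alpha^{k-1})$, so $B_1=W^*$. Choose $\alpha^k$ with $U_{i^k}(\alpha^k)=B_k/(n+1)$. If $i^k$ is (weakly) top at $\alpha^k$, then $E\le U_{i^k}(\alpha^k)$, and the revenue is $W_{i^k}-U_{i^k}(\alpha^k)-(n-1)E\ge W_{i^k}-\frac{n}{n+1}B_k\ge\frac{B_k}{n+1}$, because $W_{i^k}=U_{i^k}(1)\ge B_k$. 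Otherwise the new top agent $i^{k+1}$ has $B_{k+1}=U_{i^{k+1}}(\alpha^k)>B_k/(n+1)$. It was weakly below $i^k$ at $\alpha^{k-1}$ and strictly above at $\alpha^k<\alpha^{k-1}$, so its slope $p_{i^{k+1}}r$ is strictly smaller. Hence no agent repeats, the process stops within $n$ rounds, and the revenue is at least $(n+1)^{-n}W^*$.

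Your lower bound is likewise only a plan, and its parameter hint is off. Suppose consecutive slopes shrink by a factor $\rho$ in such a chain. At the crossing of agents $k$ and $k+1$, the principal's revenue is at most $W_k-nu$, where $u$ is their common utility. This bound is $\le 0$ only if $W_{k+1}/W_k\ge(1+(n-1)\rho)/n$. With $\rho\approx 1/n$ this gives only about $(n/2)^{n-1}$, which is not $n^{n-O(1)}$; you need $\rho=O(1/n^2)$. The paper takes $p_i=(n(n-1))^{-(i-1)}$ and $W_i=\gamma(\beta/n)^{i-1}$ with $\beta=1+1/n$. There every agent $i<n$ earns the principal at most $0$, agent $n$ earns $W_n$ at its threshold, and $\opt/\optefs=(n/\beta)^{n-1}\ge n^{n-1}/e$.
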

Before proving these bounds, we establish a useful characterization of optimal single-task EFS contracts. 

\subsection{A Characterization for Optimal Single-task EFS Contracts}

We have the following key lemma for characterizing the optimal solution when assigning a single task in EFS contracts. Let the contract for this task be $\alpha \in [0, 1]$, and the reward be $r$. The utility agent $i$ is $U_i(\alpha) = \alpha p_i r - c_i$, if it is assigned this task and obligated to exert efforts.

\begin{lemma}\label[lemma]{lemma:min_sub}
Consider the setting of assigning a single task in EFS contracts. Let $\alpha \in [0, 1]$ be a contract such that there is at least one agent $i$ with $U_i(\alpha)\ge 0$. Under such a contract $\alpha$, the optimal solution must assign the task to the agent $i^*$ such that $U_{i^*}(\alpha) \ge 0$ and that $U_{i^*} (\alpha) \ge U_{j} (\alpha)$ for all agents $j$. Moreover, the subsidies that maximize the principal's revenue and are EFS are  $s_{i^*} = 0$ and $s_j = \max_{j' \neq {i^*}} \max\{U_{j'}(\alpha), 0\}$ for all agent $j\neq i^*$.
\end{lemma}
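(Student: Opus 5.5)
Fix the contract $\alpha$. The task goes to some agent $a$ with $U_a(\alpha)\ge 0$, as the effort constraint requires. The plan is to first determine, for this fixed assignee, the cheapest subsidy vector satisfying the \eqref{efsconstraints}. Then we compare total subsidies across admissible choices of $a$. The principal's revenue is $(1-\alpha)p_a r-\sum_i s_i$, so we seek to minimize total subsidy. The first term depends on $a$, however, and we return to this at the end. Write $u_i=\max\{U_i(\alpha),0\}$.

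\emph{Constraints for a fixed assignee.} For $j\neq a$ the bundle is empty, so the constraints read:
\begin{itemize}
\item for agent $a$ against $j$: $U_a(\alpha)+s_a\ge s_j$;
\item for $j$ against $a$: $s_j\ge u_j+s_a$;
\item for $j$ against $j'$, both $\neq a$: $s_j\ge s_{j'}$.
\end{itemize}
The last family forces all non-assigned agents to receive a common subsidy $t$. The remaining constraints become $t\ge s_a+\max_{j\neq a}u_j$ and $t\le s_a+U_a(\alpha)$. Total subsidy $s_a+(n-1)t$ is increasing in both $s_a$ and $t$, so we set $s_a=0$ and $t=\max_{j\neq a}u_j$. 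This is feasible if and only if $U_a(\alpha)\ge \max_{j\neq a}u_j$, that is, if and only if $a$ is a (weak) maximizer of $U_\cdot(\alpha)$ with $U_a\ge0$. If $a$ is not a maximizer, then $s_a+U_a<s_a+\max_{j\ne a}u_j$, so no EFS subsidy exists at all.

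\emph{Conclusion.} Only agents $i^*$ with $U_{i^*}(\alpha)\ge U_j(\alpha)$ for all $j$ (and $U_{i^*}\ge0$) can receive the task. Given such an $i^*$, the revenue-maximizing subsidies are exactly $s_{i^*}=0$ and $s_j=\max_{j'\neq i^*}\max\{U_{j'}(\alpha),0\}$, which is the claimed form. Because the derivation is pure constraint manipulation, it also settles uniqueness of the minimal subsidies given $i^*$.

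The main subtlety is ties among maximizers. Each such assignee is feasible with the stated subsidies, and the statement only asserts that the optimum picks some maximizer. The infeasibility of non-maximizers from the previous step is therefore the crux, together with checking the boundary case where all other $U_j<0$, so that $t=0$.
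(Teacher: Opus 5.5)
Your proposal is correct and follows essentially the same route as the paper: fix the assignee and write out the three families of EFS constraints, using that non-assignees hold empty bundles. From these you deduce that all non-assignees share a common subsidy and that feasibility forces $U_{i^*}(\alpha)\ge \max_{j\neq i^*}\max\{U_j(\alpha),0\}$, then set $s_{i^*}=0$ with the common subsidy at its lower bound. Your explicit minimization of $s_a+(n-1)t$ subject to $t\ge s_a+\max_{j\neq a}u_j$ is a bit more careful than the paper's ``clearly'' at that step, but the argument is the same.
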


\begin{proof}
First, we note that a contract $\alpha$ such that there is at least one agent $i$ with $U_i(\alpha)\ge 0$ must exist, given \cref{asp:usefulness}. 
Suppose that the task is assigned to any agent $i$ with contract $\alpha$. Then, we have that the following EFS constraints must be satisfied:
\begin{equation}\label{singletaskefs}
\begin{aligned}
s_j \ge \max\{U_j(\alpha), 0\} + s_i,\quad  &\forall j \neq i\\
U_{i}(\alpha) +s_i \ge s_j,\quad  &\forall j \neq i\\
s_j \ge s_{j'},\quad  &\forall j \neq j'
\end{aligned}
\end{equation}
The first two constraints imply that $U_i (\alpha) \ge \max\{U_j(\alpha), 0\}$ must hold for all the agent $j\neq i$.
The third constraint implies that the subsidies for all agents $j\neq i$ must be the same.

    Let set $s_i=0$ and $s_j=\max_{j' \neq i} \max\{U_{j'}(\alpha), 0\}$ for all agent $j\neq i$. Clearly, these subsidies terms maximize the principal's revenue. Finally, 
    We only need to show that these are the smallest subsidies that guarantee the EFS constraints. Regarding agent $i$ this is trivial.
    Then, since $s_j=s_{j'}$,  we get $s_j\ge\max_{j' \neq i} \max\{U_{j'}(\alpha), 0\}$ for all $j\neq i$ for the first constraint. This concludes the proof.
\end{proof}

\subsection{Proof of \cref{main_theorem_pof}}

In particular, we show that for the price of fairness, the upper bound is $O((n+1)^n)$ (\cref{price_of_fair_EFS}) and the lower bound is $\Omega(n^{n-1})$ (\cref{priceoffainesslowerbound}).  These bounds differ only by a polynomial factor $O(n)$, together implying a PoF of $n^{n+O(1)}$.

\begin{figure}[t]
    \centering
    \includegraphics[width=0.7\textwidth]{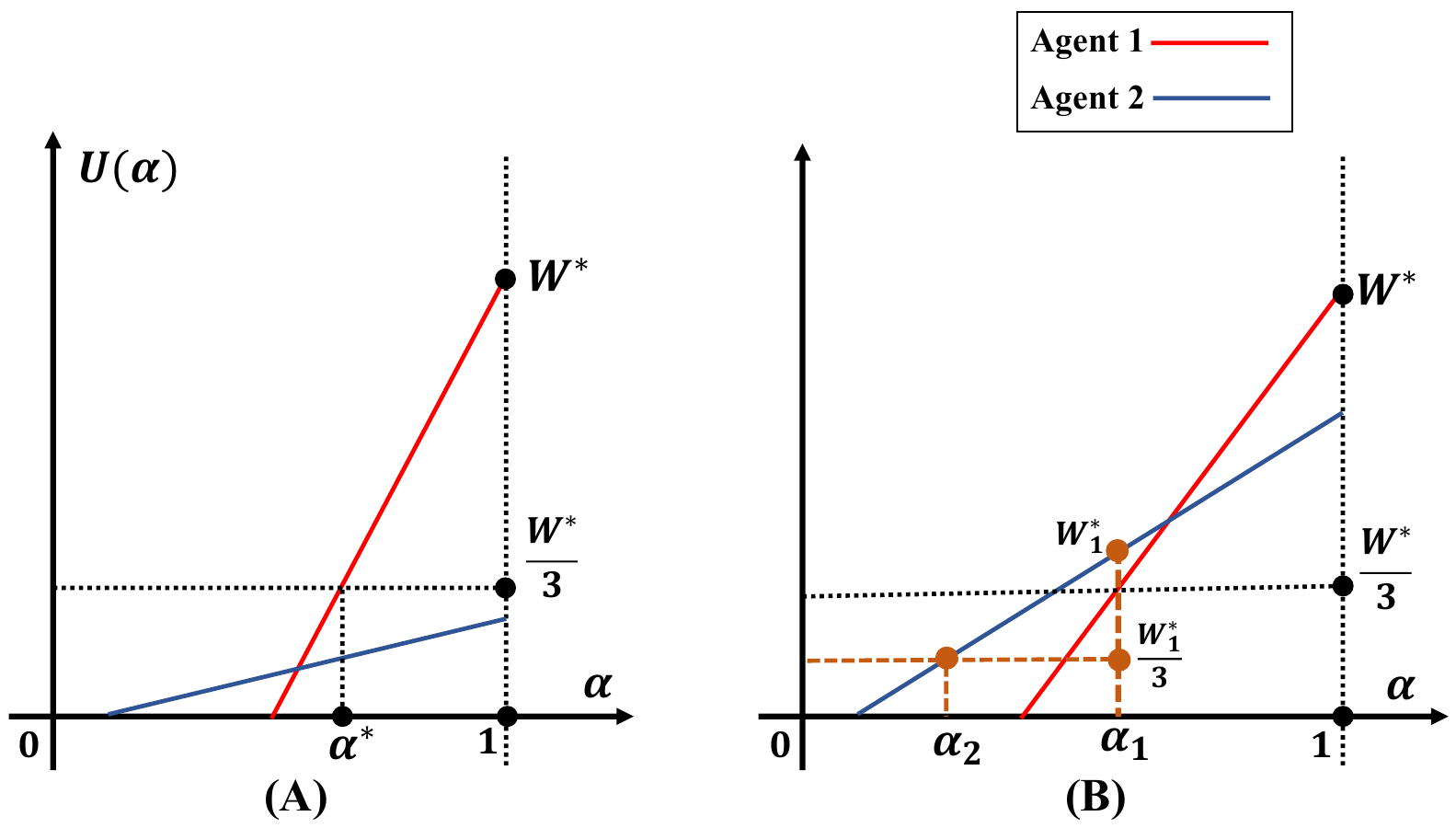}
    \caption{Illustration for $n=2$. The quantity $W^*$ is the largest revenue without fairness, obtained by assigning task to agent $1$. Figure (A) depicts an example where at contract $\alpha=\alpha^*$ satisfying $U_1(\alpha^*) = \frac{W^*}{3}$ and agent $1$ has the largest utility; then the contract $\alpha^*$ gives a $\frac{1}{3}$ approximation and the task is assigned to agent $1$. Otherwise, as in Figure (B), agent $2$ gives the largest utility. we repeat the process and compute a new contract $\alpha_2$ of agent $2$ satisfying $U_2(\alpha_2) = \frac{W^*_1}{3}$. In this case, the principal gains a revenue at least $\frac{W^*_1}{3}$, which is at least a $(\frac{1}{n+1})^n = \frac{1}{9}$ fraction of $W^*$. }
    \label{fig_illustaraint}
\end{figure}

\begin{proposition}\label[proposition]{price_of_fair_EFS}
    The price of fairness in EFS contracts is upper-bounded by $O((n+1)^n)$, where $n$ is the number of agents.
\end{proposition}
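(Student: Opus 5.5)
By the additivity of the principal's revenue, I would reduce \cref{price_of_fair_EFS} to the single-task case and then handle a single task with an iterative argument built on \cref{lemma:min_sub}.

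\textbf{Step 1: reduction to one task.} Suppose that for each task $j$ separately we have a single-task EFS contract: an owner $i_j$, a share $\alpha_j$ and subsidies $s^j=(s^j_1,\dots,s^j_n)$. Combine them by keeping the allocation and shares and paying subsidies $s_i=\sum_j s^j_i$. The per-task EFS constraint for the pair $(i,k)$ reads
\[
\mathbf 1[i=i_j]\,U_{i,j}(\alpha_j)+s^j_i\ \ge\ \mathbf 1[k=i_j]\max\{U_{i,j}(\alpha_j),0\}+s^j_k .
\]
Summing it over $j$ gives exactly \eqref{efsconstraints} for $(i,k)$. The revenue of the combined contract is the sum of the per-task revenues. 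The unconstrained optimum is also additive, $\opt=\sum_j W^*_j$ with $W^*_j=\max_i (p_{i,j}r_j-c_{i,j})$, because the best unconstrained choice is the share $c_{i,j}/(p_{i,j}r_j)$. Hence it suffices to show that every single task admits an EFS contract with revenue at least $W^*/(n+1)^n$.

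\textbf{Step 2: single task.} Write $W_i=p_ir-c_i$ for agent $i$'s full surplus. If the task goes to $i$ at share $\alpha$ and $i$ maximizes $U(\alpha)$ with $U_i(\alpha)\ge0$, then by \cref{lemma:min_sub} the optimal subsidies are $s_i=0$ and $s_j=\max_{j'\ne i}\max\{U_{j'}(\alpha),0\}\le U_i(\alpha)$. The revenue is therefore
\[
W_i-U_i(\alpha)-\sum_{j\neq i}s_j\ \ge\ W_i-n\,U_i(\alpha).
\]
The iteration then runs as follows.
\begin{enumerate}
\item Start with $a_1=\arg\max_i W_i$, so $W_{a_1}=W^*$.
\item Choose $\alpha_1$ with $U_{a_1}(\alpha_1)=W^*/(n+1)$. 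This is possible because $U_{a_1}$ increases linearly from $-c_{a_1}$ at $\alpha=0$ to $W^*$ at $\alpha=1$.
\item If $a_1$ is the utility maximizer at $\alpha_1$, the bound above yields revenue at least $W^*/(n+1)$, and we stop.
\item Otherwise some $a_2$ has $U_{a_2}(\alpha_1)>W^*/(n+1)$. Since $\alpha_1\le1$, this gives $W_{a_2}\ge U_{a_2}(\alpha_1)>W^*/(n+1)$.
\item Repeat with $a_2$: choose $\alpha_2$ with $U_{a_2}(\alpha_2)=W_{a_2}/(n+1)$, and so on.
\end{enumerate}
Each step loses at most a factor $n+1$ in surplus. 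So if the process terminates within $n$ steps at some agent $a_t$, the revenue is at least $W_{a_t}/(n+1)\ge W^*/(n+1)^{n}$, which is the claimed $O((n+1)^n)$ bound.

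\textbf{Step 3 (main obstacle): termination without revisiting an agent.} I must show that the process never returns to a previously visited agent, so it ends after at most $n$ steps. I would argue with the geometry of the utility lines $U_i(\alpha)=\alpha p_ir-c_i$, any two of which cross at most once. The plan is to show that the shares $\alpha_t$ are monotone, and that at each step the new maximizer $a_{t+1}$ dominates all earlier agents on the relevant side of $\alpha_t$. This dominance would then exclude earlier agents from being maximizers at $\alpha_{t+1}$.

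If monotonicity fails with the thresholds $W_{a_t}/(n+1)$, I would try two modifications.
\begin{itemize}
\item Use thresholds defined relative to the previous level instead, $U_{a_{t+1}}(\alpha_{t+1})=W^*/(n+1)^{t+1}$. These are decreasing, which forces $\alpha_{t+1}<\alpha_t$ whenever $a_{t+1}$'s line lies above earlier lines.
\item Alternatively, pick $\alpha$ at which the upper envelope $\max_iU_i$ equals $W^*/(n+1)^t$, and apply the same charging argument to the envelope maximizer.
\end{itemize}
Verifying this non-cycling claim is the step I expect to require the most care.
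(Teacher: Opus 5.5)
Your Step 1 (summing the per-task EFS constraints, additivity of $\opt$) and your per-round accounting in Step 2 are correct and match the paper's proof. The genuine gap is Step 3. You state the bound $W^*/(n+1)^n$ only conditionally on the loop stopping within $n$ rounds, and you never prove that it stops at all. You also leave open whether your thresholds $U_{a_t}(\alpha_t)=W_{a_t}/(n+1)$ even allow this. That step is essential: it is exactly what turns the per-round loss factor $n+1$ into the exponent $n$.

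The missing argument is short, and your thresholds work unchanged. Assume $W^*>0$ and put $\alpha_0=1$. Always take $a_{t+1}$ to be a maximizer of $U_i(\alpha_t)$, not merely some agent above the threshold. Then $a_t$ maximizes $U_i(\alpha_{t-1})$ for every $t$; for $t=1$ this holds because $U_i(1)=W_i$. Now show by induction that $\alpha_t<\alpha_{t-1}$. If round $t$ does not stop, the linear function $U_{a_t}-U_{a_{t+1}}$ is $\ge 0$ at $\alpha_{t-1}$ and $<0$ at $\alpha_t<\alpha_{t-1}$. Hence it is strictly increasing, so $p_{a_{t+1}}<p_{a_t}$. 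Evaluating it at $\alpha=1\ge\alpha_{t-1}$ also gives $W_{a_{t+1}}\le W_{a_t}$. Therefore
\[
U_{a_{t+1}}(\alpha_{t+1})=\frac{W_{a_{t+1}}}{n+1}\le\frac{W_{a_t}}{n+1}=U_{a_t}(\alpha_t)<U_{a_{t+1}}(\alpha_t),
\]
so $\alpha_{t+1}<\alpha_t$, since $U_{a_{t+1}}$ is strictly increasing. The visited agents thus have strictly decreasing slopes, so none is revisited. The loop therefore stops at some round $t\le n$ with revenue at least $W^*/(n+1)^t$.

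The paper avoids the extra inequality $W_{a_{t+1}}\le W_{a_t}$ by setting the next threshold relative to the utility at the previous share, $U_{a_{t+1}}(\alpha_{t+1})=U_{a_{t+1}}(\alpha_t)/(n+1)$. This makes $\alpha_{t+1}<\alpha_t$ automatic, after which the paper uses the same slope argument. Your geometric-threshold fallback would work for the same reason.
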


The key step for proving the upper bound in \cref{price_of_fair_EFS} is to show that $O((n+1)^n)$ is an upper bound for the case of designing an EFS contract for a single task. After showing this, by the fact that the principal's revenue is additive, we can extend the upper bound $O((n+1)^n)$ to the case of multiple tasks.

Consider the case of allocating a single task with reward $r$. For a contract $\alpha$, let $U_i(\alpha) \triangleq  \alpha p_i r - c_i$ be agent $i$'s (possibly negative) utility from exerting effort on the task.
Suppose that a contract $\alpha$ is given. \cref{lemma:min_sub} implies that in the optimal solution that maximizes the principal's revenue, this task must be assigned to an agent (e.g.,  denote it as agent $i$) that has the maximum nonnegative utility $\max\{U_i(\alpha),0\}$, and the subsidy terms are set accordingly.
Our proof mainly utilizes this observation to find an EFS contract, and the algorithm proceeds iteratively: in each iteration, where a contract is given and some agent is selected, it verifies whether the selected agent gives the largest nonnegative utility so that \cref{lemma:min_sub} can be applied.
To be more specific,  let $W_i$ be the welfare of agent $i$ (i.e., the principal's revenue from agent $i$ without fairness constraints), and let $i^*$ be the agent giving the largest welfare. In each iteration, compute the contract $\alpha^*$ such that $U_{i^*}(\alpha^*) = \frac{W_{i^*}}{n+1}$. If agent $i^*$ has the largest utility under contract $\alpha^*$, then we assign this task to agent $i^*$ and set the subsidies as in \cref{lemma:min_sub}. This clearly gives the principal at least revenue $\frac{W_{i^*}}{n+1}$. 
If agent $i^*$ does not have the largest utility, we move to the next iteration and obtain a new subproblem by redefining the welfare of agents as the agents' utilities (i.e., $\max\{U_i(\alpha^*), 0\}$) under contract $\alpha^*$. This process repeats at most $n$ times, which finally implies an approximation ratio of $(n+1)^n$ by induction. The high-level idea of the algorithm is illustrated in Figure~\ref{fig_illustaraint}.

\begin{algorithm}[t]
\caption{Computing the upper bound $O((n+1)^n)$ for EFS contract\\
\textbf{Input:}  $r$, $\{p_{i}\}_{i \in \mathcal{N}}$, $\{c_{i}\}_{i \in \mathcal{N}}$, $ n, \{W_i\}_{i \in \mathcal{N}}$ \\
\textbf{Output:} The contract ${\alpha}^*$ and the assigned agent $i^*$
}\label{algorithm_appr_EFS}
$W^0_i = W_i, \forall i \in [n]$ \;
\For {$k=1,2,\dots, n$}{\label{line2:appr_EFS}
$i^k = \arg\max_{i \in \mathcal{N}} W_i^{k-1}$\;
$T^k = \frac{W^{k-1}_{i^k}}{n+1}$, $\alpha^k = \frac{c_{i^k} + T^k}{p_{i^k} r}$ \label{line4:appr_EFS}\;
$i^* = \arg\max_{i \in \mathcal{N}} U_{i} (\alpha^k)$, {\text{tie breaks in favor of $i^k$}}\;
\If{$i^* = i^k$}{ \label{line6:appr_EFS}
$\alpha^* = \alpha^k$\;
break\;
}
$W_i^k = U_i(\alpha^k), ~\forall i \in \mathcal{N}$

}
\Return $\alpha^*$ and $i^*$.
\end{algorithm}

\begin{proof}
    We first prove the upper bound $O((n+1)^n)$. This starts by
showing that $(n+1)^n$ upper bound holds for the case with a single task, i.e., $m=1$.
Let the contract for this task be $\alpha \in [0, 1]$, and the reward be $r$. The utility of agent $i$ is $U_i(\alpha) = \alpha p_i r - c_i$, if it is assigned this task and exerts efforts. The welfare (i.e., the maximum principal's revenue) from assigning task to agent $i$ is $W_i = p_i r - c_i$.

Suppose that contract $\alpha$ is given such that there is at least one agent with nonnegative utility, i.e., $\max_{i}U_{i}(\alpha)\ge 0$. We know that such a contract must exist due to  \cref{asp:usefulness}. By \cref{lemma:min_sub}, the task must be assigned to the agent (e.g., agent $i$) with the largest utility, and the optimal subsidy terms are $s_i = 0$ and $s_j = \max_{j' \neq i} \max\{U_{j'}(\alpha), 0\}$ for all agent $j\neq i$. 
This motivates us to build a feasible EFS contract for a single task, and our algorithm is presented in Algorithm \ref{algorithm_appr_EFS}. 


We next show that Algorithm \ref{algorithm_appr_EFS} computes a contract that gives an $(n+1)^n$-approximation ratio for the single task case.
Let $W^* =\opt= \max_{i\in \mathcal{N}} W_i$ be the maximum revenue (i.e., the maximum welfare) without fairness constraints. The following lemma shows that the returned contract provides the principal's revenue at least $(\frac{1}{n+1})^n W^*$ while being EFS. .

\begin{lemma}\label[lemma]{lemma_of_single_appr}
    Algorithm \ref{algorithm_appr_EFS} returns a contract for the single task case, by which the principal gets revenue at least $(\frac{1}{n+1})^n W^*$.
\end{lemma}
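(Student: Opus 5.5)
We prove \cref{lemma_of_single_appr} by induction on the iterations of Algorithm~\ref{algorithm_appr_EFS}. For this we maintain two invariants. The first is quantitative: for every iteration $k$ that is reached,
\[
\max_i W^{k-1}_i \ge \left(\tfrac{1}{n+1}\right)^{k-1} W^*.
\]
The second is structural: the agents $i^1,\dots,i^k$ selected so far are pairwise distinct. The structural invariant guarantees termination within $n$ iterations. The quantitative one, combined with \cref{lemma:min_sub}, converts the stopping iteration into the revenue bound.

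\textbf{Step 1 (revenue when the loop breaks at iteration $k$).} At $\alpha^k$ we have $U_{i^k}(\alpha^k)=T^k\ge 0$, and $i^k$ attains the maximal utility. So \cref{lemma:min_sub} applies: assign the task to $i^k$, set $s_{i^k}=0$, and give every other agent the subsidy $\max_{j\neq i^k}\max\{U_j(\alpha^k),0\}\le T^k$. The principal's revenue is
\[
(1-\alpha^k)p_{i^k}r-(n-1)\max_{j\ne i^k}\max\{U_j(\alpha^k),0\}\ \ge\ \big(W_{i^k}-T^k\big)-(n-1)T^k,
\]
using $(1-\alpha^k)p_{i^k}r=p_{i^k}r-c_{i^k}-T^k$. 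The key point is to compare $W_{i^k}$ (the original welfare) with $W^{k-1}_{i^k}$. Each $W^{k}_i=U_i(\alpha^k)\le p_ir-c_i=W_i$ because $\alpha^k\le 1$, and $W^{k-1}_{i^k}\ge 0$. Hence the revenue is at least $W^{k-1}_{i^k}-nT^k=T^k=W^{k-1}_{i^k}/(n+1)$. By the quantitative invariant this is at least $(n+1)^{-k}W^*\ge (n+1)^{-n}W^*$. Along the way we must check $\alpha^k\in[0,1]$; this follows from $T^k\le W^{k-1}_{i^k}\le W_{i^k}=p_{i^k}r-c_{i^k}$.

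\textbf{Step 2 (the quantitative invariant when the loop does not break).} If $i^*\neq i^k$, then $U_{i^*}(\alpha^k)>U_{i^k}(\alpha^k)=T^k$, the inequality being strict by the tie-breaking rule. Therefore
\[
\max_i W^k_i>T^k=\tfrac{1}{n+1}\max_i W^{k-1}_i,
\]
which advances the quantitative invariant.

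\textbf{Step 3 (the structural invariant and termination).} This is the main obstacle: we must show the loop actually breaks by iteration $n$, i.e.\ that no agent is selected twice. The contracts increase strictly, $\alpha^{k+1}>\alpha^k$. To see this, note that $i^{k+1}$ satisfies $U_{i^{k+1}}(\alpha^{k+1})=T^{k+1}>0$, that $T^{k+1}$ is a fraction of $W^k_{i^{k+1}}=U_{i^{k+1}}(\alpha^k)$, and that $T^{k+1}<U_{i^{k+1}}(\alpha^k)$. We would then argue that once $i^k$ has been beaten at $\alpha^k$ by an agent $i^*$ with $U_{i^*}(\alpha^k)>U_{i^k}(\alpha^k)$, agent $i^k$ remains dominated at every later contract whenever $p_{i^*}\ge p_{i^k}$. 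The delicate part is handling agents with smaller slope; for those we would use that utilities are linear in $\alpha$, so the ordering of the agents changes at most once per pair. If this monotonicity argument proves messy, the fallback is a potential argument. Since each $W^k$ is the maximum of linear functions evaluated at $\alpha^k$, an agent re-selected at iteration $k'>k$ would have its new target utility strictly below its earlier one while the contract has moved up, contradicting linearity with positive slope. Either route yields at most $n$ iterations. If the loop never breaks within $n$ iterations, the final iteration's bound still holds with exponent $n$, completing the proof.
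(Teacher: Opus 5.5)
\textbf{Verdict: there is a gap, in Step~3.} Your Steps~1 and~2 are correct and match the paper: $W_{i^k}\ge W^{k-1}_{i^k}$, the subsidy bound $\le T^k$ from \cref{lemma:min_sub}, and the strict inequality $\max_i W^k_i>T^k$ from tie-breaking.

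Step~3 is what makes the algorithm return a contract at all, and it does not work as written. Your claim that contracts increase, $\alpha^{k+1}>\alpha^k$, is false. It also contradicts the facts you list: $U_{i^{k+1}}(\alpha^{k+1})=T^{k+1}<U_{i^{k+1}}(\alpha^k)$, and $U_{i^{k+1}}$ is increasing in $\alpha$, so $\alpha^{k+1}<\alpha^k$. In general $\alpha^k<\alpha^{k-1}$ with $\alpha^0=1$, because $W^{k-1}_i=U_i(\alpha^{k-1})$ and $T^k<W^{k-1}_{i^k}$ whenever $W^*>0$. This has three consequences:
\begin{itemize}
\item Your domination argument handles the case $p_{i^*}\ge p_{i^k}$, which never occurs.
\item Your fallback potential argument gives no contradiction. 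A re-selected agent with a smaller target utility at a \emph{smaller} contract is consistent with a positive slope.
\item Your closing sentence does not help. If the loop ran past iteration $n$ without breaking, the current agent is not the utility maximizer. Then \cref{lemma:min_sub} does not apply, and no feasible contract with the claimed revenue is produced.
\end{itemize}

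The missing idea is the short slope argument the paper uses. Agent $i^k$ maximizes utility at $\alpha^{k-1}$. If the loop does not break, $i^{k+1}$ strictly beats $i^k$ at the smaller contract $\alpha^k$. Hence $(U_{i^k}-U_{i^{k+1}})(\alpha^{k-1})\ge 0>(U_{i^k}-U_{i^{k+1}})(\alpha^k)$. Since this difference is linear in $\alpha$ and $\alpha^k<\alpha^{k-1}$, it must be increasing, so $p_{i^{k+1}}<p_{i^k}$. The slopes of the selected agents therefore strictly decrease, and no agent is selected twice. If the loop did not break at iteration $n$, the maximizer at $\alpha^n$ would be an $(n+1)$-st distinct agent, which is impossible. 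So the loop breaks by iteration $n$, and your Steps~1--2 then give the $(n+1)^{-n}W^*$ bound. (The case $W^*=0$ is trivial, and $W^*>0$ guarantees $W^{k-1}_{i^k}>0$ and $p_{i^k}>0$, which the strict decrease of $\alpha^k$ requires.)
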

\begin{proof}
First, notice that by setting $\alpha^k$ as in Line \ref{line4:appr_EFS}, we have that the agent $i^k$'s utility as $U_{i^k}(\alpha^k) = \alpha^k p_{i^k} r - c_{i^k} = T^k = \frac{W^{k-1}_{i^k}}{n+1}$, implying $\alpha^k\in [0, 1]$. Therefore, if $i^* \neq i^k$, we have that
\begin{equation}\label{eqation:n_1lowerbound}
W^{k}_{i^{k+1}}=\max_{i\in \mathcal{N}} W^k_i   > U_{i^k}(\alpha^k) = \frac{W^{k-1}_{i^k}}{n+1}
\end{equation}

On the other hand, if we have $i^* = i^k$ at Line \ref{line6:appr_EFS}, then it holds $U_{i^k}(\alpha^k) \ge U_j (\alpha^k) $ for all agent $j \neq i^k$. Clearly, $U_{i^k}(\alpha^k) \ge 0$ must hold by the definition of $W^k_i$. Therefore, by \cref{lemma:min_sub}, we have a feasible contracts with $s_{i^k} = 0$ and $s_{j} \le U_{i^k}(\alpha^k) = \frac{W^{k-1}_{i^k}}{n+1}$.
Moreover, note that $W_{i^k} \ge W^{k-1}_{i^k}$. This implies that the principal gains revenue from contract $\alpha^k$ by assigning task to agent $i^k$ is at least $W_{i^k} - \frac{n}{n+1}W^{k-1}_{i^k} \ge \frac{1}{n+1}W^{k-1}_{i^k} \ge (\frac{1}{n+1})^2 W^{k-2}_{i^{k-1}} \ge \dots \ge (\frac{1}{n+1})^k W^0_{i^1} = (\frac{1}{n+1})^k W^*$, where the inequality is by (\ref{eqation:n_1lowerbound}).

Further notice that $U_i(\alpha)$ is a linear function. Hence, by iterating at most $n$ times at Line \ref{line2:appr_EFS}, one must be able to find a $i^k$ such that Line \ref{line6:appr_EFS} is true. Indeed, if the algorithm does not stop, the next selected agent must have a strictly smaller slope $p_i r$ than the current selected agent. Therefore, the same agent cannot be selected again, and the algorithm stops after at most $n$ iteration.  Therefore, the principal obtains a solution by Algorithm \ref{algorithm_appr_EFS} with revenue at least $(\frac{1}{n+1})^n W^*$, whose contract is given in \cref{lemma:min_sub}.
\end{proof}

Next, we extend the analysis to the cases of multiple tasks, $m>1$. We apply Algorithm \ref{algorithm_appr_EFS} to each task $j\in \mathcal{M}$ separately. Finally, we get an allocation of tasks $\{S_i\}_{i \in \mathcal{N}}$. Let $s_{i,j}$ be the subsidy that agent $i$ obtains from task $j$ by Algorithm \ref{algorithm_appr_EFS}. Let $U_{i}(\alpha, j) = \alpha_j p_{i,j} r_j -c_{i, j}$.

Recall that for each task, Algorithm \ref{algorithm_appr_EFS} gives a fair contract as in \cref{lemma:min_sub}. This implies that for each $j\in S_i$, the effort constraint holds, i.e., $U_{i}(\alpha, j) \ge 0$.
Furthermore, we prove that the EFS constraints hold. That is, for any $i, j\in \mathcal{N}$, it holds that
\[
U_i(\alpha, S_i) + s_i \ge \sum_{k \in S_j} \max\{U_i(\alpha_k, k), 0\} + s_j
\]
where $s_i = \sum_{j\in \mathcal{M}} s_{i,j}$ and $U_i(\alpha, S_i) = \sum_{j\in S_i} U_{i}(\alpha_j, j)$.
That is equivalent to showing
\begin{small}
\begin{align*}
&\sum_{t \in S_i} U_{i} (\alpha_t, t) + \sum_{t \in S_i} s_{i,t} + \sum_{t \in S_j} s_{i,t} + \sum_{t \in \mathcal{M} \setminus \{S_i \cup S_j\}} s_{i,t} \\
&\ge \sum_{t \in S_j} \max\{ U_{i} (\alpha_t, t), 0\} + \sum_{t \in S_i} s_{j,t} + \sum_{t \in S_j} s_{j,t} + \sum_{t \in \mathcal{M} \setminus \{S_i \cup S_j\}} s_{j,t}
\end{align*}
\end{small}
Note that for $t \in \mathcal{M} \setminus \{S_i \cup S_j\}$, we have $\sum_{t \in \mathcal{M} \setminus \{S_i \cup S_j\}} s_{i,t} =\sum_{t \in \mathcal{M} \setminus \{S_i \cup S_j\}} s_{j,t} $ by \cref{lemma:min_sub}. Hence, EFS constraint becomes
\begin{align*}
&\sum_{t \in S_i} U_{i} (\alpha_t, t) + \sum_{t \in S_i} s_{i,t} + \sum_{t \in S_j} s_{i,t}  \\
&\quad \ge \sum_{t \in S_j} \max\{ U_{i} (\alpha_t, t), 0\} + \sum_{t \in S_i} s_{j,t} + \sum_{t \in S_j} s_{j,t}
\end{align*}
This inequality holds because
\[
\sum_{t \in S_i} U_{i} (\alpha_t, t) + \sum_{t \in S_i} s_{i,t} \ge \sum_{t \in S_i} s_{j,t}
\]
and
\[
\sum_{t \in S_j} s_{i,t}  \ge \sum_{t \in S_j} \max\{ U_{i} (\alpha_t, t), 0\} + \sum_{t \in S_j} s_{j,t}
\]
where the two inequailities follow the EFS constraint for the single task case, i.e., (\ref{singletaskefs}).

Therefore, by applying Algorithm \ref{algorithm_appr_EFS} to each task separately, we get EFS contracts for cases of multiple tasks. Note that for each task, the principal achieves a fraction of $(\frac{1}{n+1})^n$ of maximum welfare. Since the principal's objective is additive, the approximation ratio $({n+1})^n$ holds for the cases of multiple tasks.
\end{proof}

Next, we show that the price of fairness has a lower bound of $\Omega(n^{n-1})$. The instance construction is nontrivial: we build a single-task instance in which the agents’ success probabilities and costs take exponential-type forms. The proof is inspired by our algorithm for establishing an $O((n+1)^n)$ approximation in the single-task setting (\cref{lemma_of_single_appr}).

\begin{proposition}\label[proposition]{priceoffainesslowerbound}
    The price of fairness of EFS contracts is lower bounded by $\Omega(n^{n-1})$.
\end{proposition}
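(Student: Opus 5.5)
The plan is to build a single-task instance ($m=1$, $r=1$) with agents $1,\dots,n$. It should force the mechanism of Algorithm~\ref{algorithm_appr_EFS} to run, essentially, for $n$ rounds and lose a factor of about $n$ in each round. The upper bound is then matched up to a polynomial factor. I would analyze $\optefs$ through \cref{lemma:min_sub}. For any contract $\alpha$ the task goes to an agent $i$ maximizing $U_i(\alpha)\ge 0$. The optimal subsidies give $n-1$ agents the amount $u_{(2)}(\alpha)=\max_{j\neq i}\max\{U_j(\alpha),0\}$ each. So the optimal EFS revenue is
\[
\optefs=\max_{\alpha}\Big[(1-\alpha)p_{i(\alpha)}-(n-1)\,u_{(2)}(\alpha)\Big]
= \max_\alpha \Big[W_{i(\alpha)}-U_{i(\alpha)}(\alpha)-(n-1)u_{(2)}(\alpha)\Big].
\]
This reduces the problem to controlling the upper envelope of $n$ lines $U_k(\alpha)=\alpha p_k-c_k$ together with its second-highest nonnegative line.

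The construction I would try uses a small parameter $\delta>0$. Set $p_{k+1}=\delta p_k$, so the slopes decrease extremely fast. Choose the costs so that the welfares satisfy $W_k=p_k-c_k=n^{-(k-1)}$, up to lower-order corrections. Then agent $n$ is the top agent on the leftmost part of the feasible range of $\alpha$, agent $n-1$ takes over further right, and so on, with agent $1$ on top only near $\alpha=1$. The intended crossing point $\alpha_k$ between agents $k$ and $k+1$ has common utility $u_k\approx W_k/n$. Because $p_{k+1}\ll p_k$, line $k+1$ is nearly flat near $\alpha_k$, so $U_{k+1}(\alpha_k)\approx W_{k+1}$. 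This is consistent with $W_{k+1}\approx W_k/n$, and it pins down $c_k$ given $p_k$ and $W_k$. Concretely, I would solve for $c_k$ so that $U_k(\alpha_k)=U_{k+1}(\alpha_k)=W_k/n$ exactly. This forces $1-\alpha_k=O(\delta)$ relative to the scale $W_k$. I would check that $\alpha_k$ is increasing in $k$'s reverse order, so the regions are ordered as claimed, and that $W_1=\opt$.

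Next comes the revenue bound region by region. On the region where agent $k<n$ is on top, the second-highest line is agent $k+1$. The revenue $W_k-U_k(\alpha)-(n-1)U_{k+1}(\alpha)$ is linear with slope $-p_k-(n-1)p_{k+1}<0$, so it is maximized at the left endpoint $\alpha_k$. There it equals $W_k-n\,u_k\approx 0$, up to $O(\delta)$ terms. On agent $n$'s region the others have nonpositive utility, so the revenue is at most $W_n=n^{-(n-1)}$. Hence $\optefs\le n^{-(n-1)}+O(\delta)$ while $\opt=W_1=1$. Letting $\delta\to0$, or choosing $\delta\ll n^{-n}$, gives $\mathrm{PoF}\ge \Omega(n^{n-1})$.

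The main obstacle is making the line configuration consistent: every line must be on top exactly on its intended interval, with the correct second line there. I also need to rule out a non-adjacent agent $j>k+1$ being the second-highest, or even the top, somewhere. To handle this I would show that for $j\ge k+2$, $U_j(\alpha)\le W_j\le W_k/n^2<u_k\le U_{k+1}(\alpha)$ on agent $k$'s region. I would also track the $O(\delta)$ slack carefully, possibly shrinking $u_k$ slightly below $W_k/n$ so that the crossings remain strictly ordered.
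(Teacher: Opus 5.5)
Your construction is essentially the paper's. The paper's instance is exactly your family $p_{k+1}=\delta p_k$ with $\delta=\frac{1}{n(n-1)}$, with the crossings placed exactly at common utility $W_k/n$. That placement forces $W_{k+1}/W_k=\frac{1+(n-1)\delta}{n}$, which is the paper's $\beta/n$. Your region-by-region use of \cref{lemma:min_sub} also matches the paper's.

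The step that fails is \emph{feasibility of the instance}, which you never check. The model requires $c_k\ge 0$ and $p_k\le 1$, i.e.\ $W_k\le p_k$ when $r=1$. Your normalization $W_1=\opt=1$ forces $p_1=1$. Then $p_2=\delta\ll n^{-n}$ while $W_2\approx 1/n$, so $c_2=p_2-W_2<0$, and likewise for every $k\ge 2$. This is not cosmetic. The crossings satisfy $1-\alpha_k=\frac{n-1}{n}\cdot\frac{W_k}{p_k}$, and under your normalization $\frac{W_k}{p_k}=\left(\frac{1+(n-1)\delta}{n\delta}\right)^{k-1}$. Hence $\alpha_k<0$ for every $k\ge 2$, and the interval structure you describe does not exist inside $[0,1]$.

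The missing ingredient is the paper's scaling factor $\gamma$ together with \cref{claim_feasible_piw}. Keep $p_1=1$ but set $W_1=\gamma\le\left(\frac{n\delta}{1+(n-1)\delta}\right)^{n-1}$. Since $W_k/p_k$ increases in $k$, this makes it at most $1$ for all $k$, so $c_k\ge 0$ and $\alpha_n=c_n/p_n\ge 0$. All quantities are homogeneous in $(W,1-\alpha)$, so the ordering of the crossings and the ratio $W_1/W_n$ are unchanged.

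Once you rescale, you can drop both the limit $\delta\to 0$ and the $O(\delta)$ bookkeeping. With exact crossings, the revenue at the left endpoint of agent $k$'s region is exactly $W_k-n\cdot\frac{W_k}{n}=0$. Hence $\opt/\optefs\ge W_1/W_n=\left(\frac{n}{1+(n-1)\delta}\right)^{n-1}$. At $\delta=\frac{1}{n(n-1)}$ this is the paper's $n^{n-1}/(1+\frac{1}{n})^{n-1}$, and any $\delta=O(1/n^2)$ suffices. Approximate slack would in fact need care after rescaling: $W_n$ is then of order $\gamma n^{-(n-1)}$, so any slack must be small relative to $W_n$, not merely $O(\delta)$ in absolute terms.

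There are three smaller inaccuracies.
\begin{enumerate}
\item On agent $n$'s region the other agents do not all have nonpositive utility; near $\alpha_{n-1}$, agent $n-1$ has utility close to $W_{n-1}/n$. The bound of $W_n$ on revenue there comes from the effort constraint $\alpha\ge c_n/p_n$, which gives $(1-\alpha)p_n\le W_n$.
\item On agent $k$'s region the runner-up is agent $k-1$ near the right endpoint, not always $k+1$. Your conclusion survives because $(1-\alpha)p_k-(n-1)\max_{j\ne k}\max\{U_j(\alpha),0\}$ is decreasing in $\alpha$ whichever line is second.
\item Your estimate $U_j\le W_j\le W_k/n^2$ only handles $j\ge k+2$. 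The envelope structure follows more cleanly by chaining consecutive crossings: line $k$ beats every $j>k$ for $\alpha\ge\alpha_k$ and every $j<k$ for $\alpha\le\alpha_{k-1}$. This also covers $j\le k-2$.
\end{enumerate}
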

\begin{proof}
    We construct an instance of a single task that gives the above lower bound. Let the reward of the task be $r=1$, and there are $n\ge 3$ agents. To facilitate exposition, we define two parameters for our construction: $\beta=1+\frac{1}{n}$
and let
\[
\gamma = \frac{1}{2}\Big( \frac{n(\beta-1)}{\beta(n-1)}\Big)^{n-1}.
\]

For each agent $i$, let its success probability be
\[
p_i = \Big( \frac{\beta-1}{n-1} \Big)^{i-1}.
\]
Moreover, we let the welfare of agent $i$ exerting effort on the task be
\[
W_i = \gamma\Big( \frac{\beta}{n} \Big)^{i-1}
\]
Hence, by definition, the cost of agent $i$ should be $c_i = p_i - W_i$. First, we show that such a construction is feasible.
\begin{claim}\label[claim]{claim_feasible_piw}
    The instance constructed above is feasible, i.e., $0\le p_i, c_i, W_i \le 1$.
\end{claim}
\begin{proof}
    First, $p_i, W_i \ge 0$ is by definition. Moreover, we have $p_i = \Big( \frac{\beta-1}{n-1} \Big)^{i-1} = \Big( \frac{\frac{1}{n}}{n-1} \Big)^{i-1} = \Big( \frac{1}{n(n-1)} \Big)^{i-1} \le 1$, which is by $\frac{1}{n(n-1)} \le 1$. $W_i \le 1$ holds by that
    \begin{align*}
    \frac{W_i}{p_i} &= \gamma\Big( \frac{\beta(n-1)}{n(\beta-1)} \Big)^{i-1} =\frac{1}{2}\Big( \frac{n(\beta-1)}{\beta(n-1)}\Big)^{n-1} \Big( \frac{\beta(n-1)}{n(\beta-1)} \Big)^{i-1}\\
    &= \frac{1}{2} \Big( \frac{n(\beta-1)}{\beta(n-1)}\Big)^{n-i} \le \frac{1}{2}
    \end{align*}
    where the last inequality is by $\frac{n(\beta-1)}{\beta(n-1)} = \frac{1}{(1+\frac{1}{n})(n-1)} \le 1$. By the above argument, we also have $W_i \le p_i$. Hence, $0\le c_i \le 1$.
\end{proof}

Next, we show the optimal revenue without fairness, $\opt$. Recall that in this case, the principal can extract the full welfare as its own revenue. 

\begin{claim}\label[claim]{claim_without_fair_opt}
    The optimal revenue without fairness $\opt$ is obtained by assigning the task to agent $1$, i.e., $W_1$.
\end{claim}
\begin{proof}
    Note that $\opt$ is the full welfare that the principal can extract. Observe that $W_{i+1} = \Big( \frac{\beta}{n} \Big)W_i < W_i$ since $\frac{\beta}{n} <1$. Hence, the maximum welfare is obtained at $i=1$.
\end{proof}

In the next step, we show that the optimal revenue $\optefs$ obtained from EFS contracts is $W_n$, which assigns the task to agent $n$.

\begin{claim}
    The optimal revenue $\optefs$ obtained from EFS contracts is $W_n$
\end{claim}
\begin{proof}
    Let the utility of agent $i$ is $U_i(\alpha) = \alpha p_i - c_i$, which is a linear function in $\alpha$ with slope $p_i$.
    Let $\alpha_i$ be the contract where agent $i$ and agent $i+1$ have the same utility, i.e., $U_i(\alpha_i) = U_{i+1}(\alpha_i)$. By our construction, we have  
    \[
    d_i := (1-\alpha_i) = \frac{W_i - W_{i+1}}{p_i - p_{i+1}} = \frac{W_i (1-\frac{\beta}{n})}{p_i (1-\frac{\beta-1}{n-1})} = \frac{W_i}{p_i}\frac{n-1}{n}
    \]
    At the contract $\alpha_i$, we then have 
    \begin{equation}\label{utiltywiovern}
    U_i(\alpha_i) = U_{i+1}(\alpha_i) = W_i - (1-\alpha_i)p_i = \frac{W_i}{n}
    \end{equation}
    Moreover, observe that $d_{i+1}>d_i$, which is by 
    \[
    \frac{d_{i+1}}{d_i} = \frac{\frac{W_{i+1}}{p_{i+1}}\frac{n-1}{n}}{\frac{W_i}{p_i}\frac{n-1}{n}} = \frac{\beta/n}{(\beta-1)/(n-1)} =\frac{n^2-1}{n} >1
    \]
    This implies the order for contracts $\{\alpha_i\}$ is $\alpha_1 > \alpha_2 >\dots > \alpha_{n-1} $. Furthermore, observe that the order for slopes of $U_i(\alpha)$ is $p_1 > p_2 >\dots>p_{n-1}$. Let $\alpha_0 = 1$. Both observations together imply that for any contract $\alpha$ in the interval $\alpha \in [\alpha_i, \alpha_{i-1})$, agent $i$ has the largest utility, and the optimal EFS contract is to assign the task to agent $i$ by \cref{lemma:min_sub}. 

Now, consider contract $\alpha \in [\alpha_i, \alpha_{i-1})$. \cref{lemma:min_sub} implies that agent $j\neq i$ receives the same subsidy $s_j$. To maximize the principal's revenue, it is clearly optimal to set the contract as $\alpha=\alpha_i$. Therefore, by (\ref{utiltywiovern}), we have that $s_j = \frac{W_i}{n}$. Therefore, the principal's optimal revenue for the interval $[\alpha_i, \alpha_{i-1})$ is at contract $\alpha=\alpha_i$
\[
W_i - U_{i}(\alpha_i) - (n-1)s_j = W_i - \frac{W_i}{n} - \frac{W_i(n-1)}{n} = 0
\]
Hence, by considering assigning the task to any agent $i < n$, we have the conclusion that the principal's optimal revenue is $0$.

Finally, we consider agent $n$. If assigning the task to agent $n$, the effort constraint implies that 
\[
\alpha p_n -c_n \ge 0,
\]
and hence, we have $\alpha \ge \alpha_n =  \frac{c_n}{p_n} = 1-\frac{W_n}{p_n}$. By the same arguments as above, we know that for any contract in the interval $\alpha \in [\alpha_n, \alpha_{n-1})$, the optimal EFS contract would assign the task to agent $n$. Therefore, to maximize the principal's revenue, we set the contract as $\alpha=\alpha_n$. Under this contract, all other agents $j \neq n$ would receive non-positive utility from the task, i.e., 
\begin{align*}
    U_j(\alpha_n) &= \alpha_n p_j - c_j \\
    &= (1-\frac{W_n}{p_n})p_j - (p_j - W_j) = W_j - W_n \frac{p_j}{p_n} \le 0
\end{align*}
where the last inequality is by that $\frac{W_i}{p_i} =  \gamma\Big( \frac{\beta(n-1)}{n(\beta-1)} \Big)^{i-1} \le \gamma\Big( \frac{\beta(n-1)}{n(\beta-1)} \Big)^{i} = \frac{W_{i+1}}{p_{i+1}}$ since $\frac{n(\beta-1)}{\beta(n-1)} = \frac{1}{(1+\frac{1}{n})(n-1)} \le 1$. Therefore, the optimal subsidies for all agents $j\neq i$ are $s_j = 0$. \cref{lemma:min_sub} also implies that $s_n=0$. Moreover, the agent $n$ receives {\it zero} utility at contract $\alpha_n$. Hence, the optimal principal's revenue is $W_n$, concluding the proof.
\end{proof}
Now, we have the price of fairness for the constructed instance as 
\[
\frac{\opt}{\optefs} = \frac{W_1}{W_n} = \frac{\gamma}{\gamma(\beta/n)^{n-1}} = \frac{n^{n-1}}{(1+\frac{1}{n})^{n-1}} \ge \Omega(n^{n-1})
\]
where the last inequality holds due to $(1+\frac{1}{n})^{n-1}\le (1+\frac{1}{n})^n \le e$. This concludes the proof.
\end{proof}

\section{Computational Results}

In this section, we present computational results of EFS constructs. Compared with EF contracts, EFS contracts have $n$ additional subsidy variables. Because the two formulations are closely related, a natural first question is how EF and EFS contracts are connected. Our first result shows that an
EFS instance can be reduced to an EF instance in polynomial time. Missing proofs are in the appendix.

\begin{proposition}\label[proposition]{reudtionc_propo}
    Given an EFS contract instance, we can reduce it to the EF contract problem in polynomial time.
\end{proposition}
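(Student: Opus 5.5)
The plan is to simulate subsidies inside a standard EF instance by adding \emph{dummy tasks} that behave like money. Every agent values a dummy task identically and exerts effort on it at zero cost, so the payment an agent receives from the dummies it holds acts exactly like a subsidy, both in its own utility and in how other agents value its bundle. Concretely, given an EFS instance with tasks $\mathcal M$ and agents $\mathcal N$, I would build an EF instance on the same agents. Its tasks are $\mathcal M \cup D$, where $D$ consists of $nm$ dummy tasks $d$, each with reward $r_d = 1$ and, for every agent $i$, $p_{i,d}=1$ and $c_{i,d}=0$. \cref{asp:usefulness} holds trivially for the dummies, and the construction is clearly polynomial.

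The first step is the direction from EF to EFS. Take any feasible EF contract $(S',\alpha')$ in the new instance, and for each agent $i$ set $s_i = \sum_{d \in S'_i \cap D} \alpha'_d$.
\begin{itemize}
\item Each dummy contributes $\alpha'_d \ge 0$ to every agent's valuation, including inside the $\max\{\cdot,0\}$ for others' bundles. Hence the EF constraints of the new instance, restricted to the real tasks, are exactly the EFS constraints with these $s_i$.
\item The effort constraints on dummies are automatic.
\item The principal's revenue from the dummies is $\sum_{d}(1-\alpha'_d) = nm - \sum_i s_i$.
\end{itemize}
Therefore $\rev_{\rm EF}(S',\alpha') = \rev(S,\alpha,s) + nm$, so $\optval' \le \optefs + nm$.

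The second step is the converse, and I expect it to be the main obstacle. We must show that an optimal EFS contract can be realized through the dummies, which requires that every subsidy be at most $m$; this is the capacity available if each agent receives exactly $m$ dummies with shares $\alpha'_d = s_i/m \in [0,1]$. I would argue this in two moves.
\begin{itemize}
\item First, subtracting $\min_i s_i$ from all subsidies preserves every EFS constraint, since only differences $s_i - s_j$ appear, and it weakly increases revenue. So some optimal solution has $\min_i s_i = 0$.
\item Second, let $i_0$ be an agent with $s_{i_0}=0$. Writing the constraint of each agent $j$ toward $i_0$ gives
\[
s_j \;\le\; \sum_{k\in S_j}\big(\alpha_k p_{j,k} r_k - c_{j,k}\big) + s_j \;-\; \text{(nonneg.)} ,
\]
which on its own is not the right inequality. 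Instead, I would use the constraint of $i_0$ toward $j$: $s_j \le \sum_{k\in S_{i_0}}(\alpha_k p_{i_0,k}r_k - c_{i_0,k}) - \sum_{k\in S_j}\max\{\cdot,0\} \le |S_{i_0}| \le m$, because rewards lie in $[0,1]$.
\end{itemize}
Hence $0 \le s_j \le m$ for every $j$.

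With this bound in hand, giving each agent $m$ dummies with shares $s_i/m$ yields a feasible EF contract of revenue $\optefs + nm$. Combining both directions gives $\optval' = \optefs + nm$. Moreover, optimal solutions map to each other through the explicit translations above, which completes the polynomial-time reduction. If the sharper subsidy bound fails in some corner case, the fallback is to add more dummy tasks, for example $nm^2$, which keeps the construction polynomial.
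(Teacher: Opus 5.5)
Your proof is correct and takes essentially the paper's route: dummy tasks with $p=1$, $c=0$, $r=1$ act as subsidies, and the two optima differ exactly by the number of dummies. The only difference is how you bound subsidies. You normalize to $\min_i s_i=0$ and use the zero-subsidy agent's envy constraint to get $s_j\le m$ for each agent, which lets you use $nm$ dummies with uniform shares $s_i/m$. The paper instead uses nonnegativity of the optimal revenue to get $\sum_i s_i\le m$, and packs the subsidies into $m+n$ dummies with floor/ceiling shares.
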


\citet{castiglioni2025faircontracts} shows that it is \NP-hard to compute the optimal EF contracts, whereas the reduction in \cref{reudtionc_propo} shows that EFS contracts are no harder than EF contracts from
an oracle-reduction perspective. One might therefore hope that subsidies make the optimization problem computationally easier. This hope may be further supported by the fixed-contract case: when the task-level contracts are fixed, computing an EF allocation (without subsidies) is closely related to fair division of indivisible goods and can be \NP-hard, whereas computing envy-free allocations with subsidies becomes relatively easy, as shown in~\citep{halpern2019fair}.

However, this intuition fails once the principal must jointly optimize over task allocations, agents' incentives, and subsidies. Although subsidies make it easier to restore exact fairness for a fixed contract, they do not eliminate the computational difficulty of finding the revenue-maximizing EFS contract. 
\cref{them_np} shows that computing optimal EFS contracts is also \NP-hard in general. 

\begin{theorem}\label{them_np}
    Unless $P=\NP$, there is no polynomial-time algorithm for computing the optimal EFS contracts.
\end{theorem}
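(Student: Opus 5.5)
We reduce from 3-Partition, as announced in the introduction. An instance consists of $3q$ positive integers $a_1,\dots,a_{3q}$ with $\sum_k a_k = qB$ and $B/4<a_k<B/2$. The question is whether they can be split into $q$ triples each summing to $B$. The problem is strongly \NP-hard, so we may assume every $a_k$ is bounded by a polynomial in $q$, and all numbers in the reduction can be written in polynomial size.

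\textbf{Construction.} Build an EFS instance with $q$ identical ``worker'' agents and one ``item'' task for each integer $a_k$. For item task $k$, a worker succeeds with probability $p$ and incurs cost $c$; we scale $r_k\propto a_k$ so that at the cheapest incentive-compatible share the worker's rent is proportional to $a_k$. We then add a small number of gadget agents and gadget tasks. Their role is to pin the shares $\alpha_k$ at a specific value and to make a ``balanced'' subsidy profile the only cheap one.

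The intended mechanism follows the single-task picture of \cref{lemma:min_sub}: envy between agents must be paid for in subsidies. Because the workers are identical, the EFS constraints among them reduce to
\[
u(S_i)+s_i \ge u(S_{i'}) + s_{i'} \quad \text{for all workers } i,i',
\]
so every worker ends with the same total $u(S_i)+s_i$. Compensating the workers then costs at least $q\max_i u(S_i)-\sum_i u(S_i)$. This quantity is zero exactly when the rents are equal, which means every worker's bundle sums to $B$.

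\textbf{Pinning the shares.} The gadget has to make it unprofitable for the principal to lower the $\alpha_k$; if the shares could drop, the rents could be equalized without balancing the partition. We handle this by adding to each item task an outside ``low-slope'' agent in the style of the lower-bound instance in \cref{priceoffainesslowerbound}. Assigning any item task to such an agent is very costly for the principal. In the other direction, the effort constraint forces $\alpha_k\ge c/(p r_k)$, and the revenue loss from raising $\alpha_k$ above this level is linear and dominates any subsidy savings. The target is a threshold $\Theta$ satisfying two conditions. First, if the 3-Partition instance is a yes-instance, the contract assigning the triples to workers, with minimal shares and zero subsidies on the item tasks, reaches revenue $\Theta$. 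Second, any EFS contract with revenue at least $\Theta$ must assign every item task to a worker at the minimal share with $s_i=0$ for all workers. This forces all rents to be equal, so each bundle has rent sum $B$, and the bounds $B/4<a_k<B/2$ then give triples.

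\textbf{Main obstacle.} The hardest part is the soundness direction, specifically controlling deviations that mix shares and subsidies. The principal could raise some $\alpha_k$ to move rents around, or give a gadget task to a worker to equalize rents. I would first try an exchange argument: starting from any contract with revenue at least $\Theta$, modify it into a canonical form without losing revenue. This step reuses the additivity decomposition from the proof of \cref{price_of_fair_EFS}, which splits the subsidies task by task. Along the way I would pick gadget rewards that are polynomially larger than $qB$, so that misassigning a gadget task always costs more than the maximum possible rent imbalance.
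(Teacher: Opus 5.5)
\textbf{Gap: the share-pinning step.} You pin every item share at the effort-binding value $\alpha_k=c/(pr_k)$ and claim that the rent there is proportional to $a_k$. But at that share a worker's utility from task $k$ is $\alpha_k p r_k-c=0$. Every item rent is therefore zero, and the inter-worker constraints $u(S_i)+s_i\ge u(S_{i'})+s_{i'}$ hold with $s\equiv 0$ for \emph{every} assignment. A no-instance then reaches $\Theta$ just as a yes-instance does.

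Tuning parameters will not fix this. Envy among identical workers only penalizes \emph{unequal} rents, and the principal can always equalize rents by driving them to zero. A working reduction needs a gadget that forces each worker's rent \emph{up} to a common positive level.

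Your outside low-slope agents do not provide such a gadget in a multi-task instance. \cref{lemma:min_sub} characterizes a single task, whereas EFS compares whole bundles. Combining the two constraints between worker $i$ and an unassigned outside agent only gives $u_i(S_i)\ge$ (that agent's value of the whole bundle $S_i$); it does not fix any individual share. Moreover, the unassigned outside agents envy each other's subsidies, so they must all receive equal, positive subsidies, and your $\Theta$ never accounts for this cost.

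Likewise, the task-by-task subsidy split in the proof of \cref{price_of_fair_EFS} is a way to build one feasible contract. It is not a property of arbitrary EFS contracts, so it cannot serve as the canonicalization step in your exchange argument.

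\textbf{The missing idea.} The paper creates a common positive rent requirement through envy toward a single gadget bundle, and makes subsidies self-defeating. The construction is as follows.
\begin{itemize}
\item All rewards are $1$ and $h=1/(4(q+2))$.
\item Workers have $p_{i,t_k}=ha_k/B$, $p_{i,g}=2h$, and zero costs.
\item One extra agent $0$ has $p_{0,g}=1$, $c_{0,g}=1/2$, and zero success probability on item tasks.
\end{itemize}

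Soundness then runs in four steps.
\begin{itemize}
\item Revenue $\ge 1/2$ forces $g$ to agent $0$ with $\alpha_g\ge 1/2$. Each worker values $g$ at $2h\alpha_g$, so $U_i(S_i)+s_i\ge 2h\alpha_g+s_0$.
\item Item costs are zero, so item rents are pure transfers. Summing the constraints gives revenue $\le 1-\alpha_g(1+2qh)+qh\le 1/2$.
\item Tightness forces $\alpha_g=1/2$. At that share agent $0$ has zero utility, so its own EFS constraint gives $s_0\ge s_i$, and hence $U_i(S_i)\ge h$ for every worker.
\item The total item rent is at most $qh$, with equality only if every item goes to a worker at $\alpha_k=1$. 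So $U_i(S_i)=h$ exactly, i.e., each bundle sums to $B$.
\end{itemize}

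Two features are the reverse of your plan. The item shares are pinned at the \emph{maximum} value $1$, not the minimum. The binding constraints are envy toward agent $0$, not balancing among workers.

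For completeness, the other direction is easy: on a yes-instance, giving the triples to the workers at $\alpha_k=1$, $g$ to agent $0$ at $\alpha_g=1/2$, and zero subsidies yields revenue exactly $1/2$.
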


We prove \cref{them_np} by reducing from the $3$-Partition problem. Intuitively, the construction is to establish a connection to EF contracts: an optimal EFS contract in the constructed instance becomes an EF contract when a partition exists. Our construction consists of one ``efficient" agent, which contributes a large revenue, and $q$ ``normal" agents, which contribute relatively small revenue. We show that if a $3$-partition exists, the EFS contracts achieve at least a revenue $\frac{1}{2}$. In contrast, if there does not exist a $3$-partition, the revenue of EFS contracts is strictly less than $\frac{1}{2}$. The proof of the latter case is more involved. We prove it by contradiction and show that if the revenue of EFS contracts is at least  $\frac{1}{2}$, the sets of tasks assigned to $q$ normal agents will form a valid $3$-partition solution.

While computing the optimal EFS contracts is \NP-hard in general, we show that when the number of tasks is a constant, there exists a polynomial-time algorithm for computing the optimal EFS contract.

\begin{proposition}\label[proposition]{proposition_polytime_constant}
    When the number of tasks is a constant, there exists a polynomial-time algorithm for computing the optimal EFS contract.
\end{proposition}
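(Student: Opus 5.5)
With $m$ constant, I will enumerate all allocations and, for each one, solve a polynomial-size linear program. The number of full allocations of $m$ tasks to $n$ agents is $n^m$, which is polynomial in $n$ for constant $m$. So I fix an allocation $S=(S_1,\dots,S_n)$ and aim to compute the optimal $(\alpha,s)$ for it in polynomial time. The output is the best over all enumerated allocations.

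For a fixed $S$, the variables are $\alpha\in[0,1]^m$ and $s\in\mathbb{R}^n_{\ge 0}$. The objective $\sum_i\sum_{k\in S_i}(1-\alpha_k)p_{i,k}r_k-\sum_i s_i$ is linear. The effort constraints are linear in $\alpha$, and the left-hand sides of the EFS constraints are linear in $(\alpha,s)$. The only nonlinearity is $\max\{\alpha_kp_{i,k}r_k-c_{i,k},0\}$ on the right-hand sides, which is convex piecewise linear. Since it appears on the larger side of a $\ge$ constraint, it cannot be linearized by epigraph variables: that would require an upper bound on the max, which defines a nonconvex region.

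To remove the max, I additionally guess a sign pattern. For each pair $(i,k)$ with $k\notin S_i$, I guess whether $\alpha_k p_{i,k}r_k-c_{i,k}\ge 0$ or $\le 0$. Naively this gives $2^{nm}$ patterns, which is too many, so I exploit constant $m$ differently. For each task $k$, the sign of $\alpha_kp_{i,k}r_k-c_{i,k}$ depends only on where $\alpha_k$ lies relative to the threshold $\theta_{i,k}=c_{i,k}/(p_{i,k}r_k)$; when $p_{i,k}=0$ the sign is fixed. Sorting the at most $n$ thresholds for task $k$ splits $[0,1]$ into at most $2n+1$ intervals, and on each interval every sign is determined. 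Guessing one interval per task gives at most $(2n+1)^m$ combinations, which is polynomial. Within a guessed combination, I add the linear constraints $\alpha_k\in[\text{interval}]$ and replace each max by either the linear term or $0$. Using closed intervals is safe, since the two expressions agree at the thresholds.

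The result is an LP in $m+n$ variables with $O(n^2+nm)$ constraints, solvable in polynomial time. The optimum over all (allocation, interval-combination) pairs equals $\optefs$: every feasible EFS contract lies in some closed cell, where the replaced constraints coincide with the originals. Conversely, every LP-feasible point is a genuine EFS contract, because inside the cell the substitution is exact. The total running time is $n^m(2n+1)^m\cdot\mathrm{poly}(n,m)$.

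The step I expect to need the most care is the exactness of this piecewise linearization. I will check that no spurious solutions arise at interval boundaries and that degenerate cases ($p_{i,k}=0$, or $r_k=0$) are handled by fixing the sign directly. Existence of a feasible point for at least one allocation follows from \cref{propo_existend_efs}.
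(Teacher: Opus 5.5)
Your argument is correct, but it takes a more roundabout route than the paper, and the reason you give for needing that route is mistaken.

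In the EFS constraint, the term $\max\{\alpha_k p_{i,k} r_k - c_{i,k},0\}$ sits on the right-hand (smaller) side of the inequality. So what the constraint needs is an \emph{upper} bound on the max. The set $\{(\alpha_k,z): z\ge \alpha_k p_{i,k}r_k-c_{i,k},\ z\ge 0\}$ is exactly the epigraph of the max, which is convex. It is a \emph{lower} bound on a max that would be nonconvex. Equivalently, for a fixed allocation each EFS constraint has the form ``convex $\le$ linear'', so the feasible set is already convex.

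This is precisely the paper's proof. It enumerates the $O(n^m)$ allocations and, for each one, replaces the max by a variable $z_{i,k}$ with $z_{i,k}\ge \alpha_k p_{i,k}r_k-c_{i,k}$ and $z_{i,k}\ge 0$. The reformulation is exact. Any LP-feasible $(\alpha,s,z)$ gives an EFS contract, because $z_{i,k}$ dominates the max. Any EFS contract lifts to an LP-feasible point by setting $z_{i,k}$ equal to the max. The objective does not involve $z$.

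Your cell decomposition is nonetheless sound. On each closed cell every max coincides with one of its two linear pieces, and the cells cover $[0,1]^m$. You therefore pay only an extra polynomial factor in the number of LPs ($(2n+1)^m$ in your count), which is fine for constant $m$.

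Each route has something to offer. Your LPs have no auxiliary variables, and your method does not use convexity at all. It would therefore still work in variants of the model where piecewise-linear terms appear in nonconvex positions. The paper's route is shorter and solves one LP per allocation.
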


\section{Conclusions}
We study envy-free contracts with subsidies as a way to address the fairness-revenue dilemma raised in prior works on fair contract design. In particular, EFS does not relax envy-freeness; it restores exact envy-freeness through compensatory transfers. Our results show that the resulting revenue loss is no longer unbounded, but is tightly characterized by $n^{n+O(1)}$.

One interesting open question left for the EFS contracts problem is whether there exists a polynomial-time algorithm when the number of agents is constant. We leave this question for future work.

\newpage
\bibliographystyle{apalike}
\bibliography{ref}
\newpage



\newpage
\appendix

\section{Omitted Proofs}

\subsection{Proof of \cref{propo_existend_efs}}
\begin{proof}
\citet{castiglioni2025faircontracts} shows that EF contracts always exist under \cref{asp:usefulness}, which implies that EFS contracts always exist as well. Therefore, we can find one EFS construct by setting all subsidies as $s_i =0$ and assigning each task $j$ to the agent with the smallest wage $\min_{i} \frac{c_{i,j}}{p_{i,j}r_j}$ (where we let $\frac{0}{0}=0$).
\end{proof}

\subsection{Proof of \cref{reudtionc_propo}}
\begin{proof}
Consider \eqref{efs_program}.
Let $\opt^1$ be the optimal revenue of EFS contracts. Given an EFS contracts instance, we construct the following EF contract instance: Consider a second instance with $2m+n$ tasks, where we add $m+n$ tasks (denoted as set $\mathcal{M^+}$) to EFS contract instance. The success probabilities and costs for any agent $i\in \mathcal{N}$ and task $k\in \mathcal{M}^+$ are $p_{i,k}=1$ and $c_{i,k}=0$. The reward for task $k\in \mathcal{M}^+$ is $r_k=1$.

For each agent $i$, we denote $S^1_i$ as the set of tasks assigned to agent $i$ from set $\mathcal{M}$, and as $S^2_i$ the tasks assigned to agent $i$ from $\mathcal{M}^+$. Hence, the program for the constructed EF contract is
\begin{align*}
 \max \quad & \sum_i[\sum_{k \in S^1_i} (1-\alpha_k) p_{i, k }r_k + \sum_{k \in S^2_i} (1-\alpha_k) p_{i, k }r_k] \\
 \text{s.t.} \quad    & \sum_{k \in S^1_i} \alpha_k p_{i, k }r_k - c_{i,k} + \sum_{k \in S^2_i} \alpha_k p_{i, k }r_k \\
 &\quad \geq \sum_{k \in S^1_j} \max\{\alpha_k p_{i, k} r_k - c_{i, k},0\} + \sum_{k \in S^2_j} \alpha_k p_{i, k }r_k,  \\
 &\qquad \qquad \qquad \qquad \qquad \qquad \qquad \qquad\quad \forall j \neq i \\
 &\text{(\ref{effort-constraints})}
\end{align*}
Let $\opt^2$ be its optimal value. Next, we show that solving the constructed EF contract instance optimally is equivalent to solving the EFS contract optimally. Specifically, we have
\begin{claim}\label{claim56opt}
    $\opt^1=\opt^2-m-n$.
\end{claim}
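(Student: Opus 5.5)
We prove \cref{claim56opt} by exhibiting value-preserving maps in both directions between feasible solutions of the two programs. The idea is that the $m+n$ auxiliary tasks in $\mathcal{M}^+$ cost nothing to perform and are worth the same to every agent. Paying for them with shares $\alpha_k$ therefore moves money to a specific agent in a way that every other agent values identically, which is exactly how a subsidy behaves.

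\emph{Direction $\opt^2 \ge \opt^1+m+n$.} Take an optimal EFS solution $(S,\alpha,s)$ of the original instance. The first step is to bound the subsidies. By \cref{lemma:min_sub} applied task by task (or directly from optimality), we may assume each $s_i\le \sum_k \alpha_k p_{i,k}r_k\le m$ and $\sum_i s_i\le m+n$ up to scaling. To be safe, I would argue that some optimal solution has $\sum_i s_i \le m$. Indeed, subsidies only need to cover envy from $\mathcal{M}$-tasks, each worth at most $1$ to any agent, so reducing all subsidies by their minimum keeps the EFS constraints and weakly improves revenue.

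Given this bound, distribute the auxiliary tasks so that agent $i$ receives a set $S^2_i$ with $\sum_{k\in S^2_i}\alpha_k = s_i$. This uses shares in $[0,1]$, at most $\lceil s_i\rceil+1$ tasks for agent $i$, and the remaining tasks get $\alpha_k=0$. The count $m+n$ is exactly enough: $\sum_i(\lfloor s_i\rfloor+1)\le m+n$, which is presumably why that number was chosen. The effort constraints on $\mathcal{M}^+$ hold trivially since $c=0$. Each agent values $S^2_j$ at $s_j$, so the EF constraints reduce to the EFS constraints. The revenue equals $\rev(S,\alpha,s)+(m+n)$.

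\emph{Direction $\opt^1 \ge \opt^2-m-n$.} Take an optimal EF solution of the new instance. Set $s_i=\sum_{k\in S^2_i}\alpha_k$ and restrict to $\mathcal{M}$. Because $p=1$ and $c=0$ on $\mathcal{M}^+$, every agent evaluates $S^2_j$ as exactly $s_j$, so the EF constraints become verbatim the EFS constraints. Effort constraints carry over, and $s_i\ge 0$ holds. The revenue on $\mathcal{M}^+$ is $\sum_k(1-\alpha_k)=m+n-\sum_i s_i$, giving EFS revenue $\opt^2-m-n$.

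The main obstacle is the capacity step in the first direction: guaranteeing that optimal subsidies are small enough to be encoded by $m+n$ unit tasks. I would handle this with the shift-down argument, which makes $\min_i s_i=0$. I would then bound $s_i\le s_{i^0}+\sum_{k\in S_{i}}\ldots$ by chaining the EFS constraints from a zero-subsidy agent, so that each $s_i\le m$. If a tighter total bound is needed, I would instead note that each agent can absorb fractional remainders on one extra task, so $\sum_i \lceil s_i\rceil\le m+n$ suffices.
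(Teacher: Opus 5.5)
Your two maps between solutions are exactly the paper's, and your argument for $\opt^1\ge\opt^2-m-n$ is fine. The gap is in the capacity step of the other direction. You correctly flag it as the crux, but you do not close it. Shifting all subsidies down so that $\min_i s_i=0$, then chaining the EFS constraint from the zero-subsidy agent $i_0$, gives only a per-agent bound $s_i\le U_{i_0}(S_{i_0})\le m$. That yields $\sum_i s_i\le (n-1)m$, whereas your encoding needs $\sum_i\lceil s_i\rceil\le m+n$.

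No argument that uses only the EFS constraints and $\min_i s_i=0$ can do better. Take one task and $n$ identical agents with $p=r=1$, $c=0$, and $\alpha=1$. The minimal subsidies from \cref{lemma:min_sub} are $0,1,\dots,1$, which sum to $n-1$ and exceed $m=1$ once $n\ge 3$. Your fallback remarks do not help either:
\begin{itemize}
\item subsidies cannot be rescaled, so ``up to scaling'' has no meaning;
\item \cref{lemma:min_sub} describes per-task subsidies for a fixed contract, not the subsidies of a multi-task optimum;
\item ``absorbing fractional remainders on one extra task'' covers only the $+n$ slack, not the integer parts $\lfloor s_i\rfloor$.
\end{itemize}

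The missing idea is to use the objective. This is what underlies the paper's assertion that every optimal EFS solution has $\sum_i s_i\le m$. By \cref{propo_existend_efs} there is a feasible EFS contract with $s=0$. Its revenue $\sum_i\sum_{k\in S_i}(1-\alpha_k)p_{i,k}r_k$ is nonnegative, so $\opt^1\ge 0$. Hence at an optimum
$\sum_i s_i\le\sum_i\sum_{k\in S_i}(1-\alpha_k)p_{i,k}r_k\le\sum_k r_k\le m$,
so $\sum_i\lceil s_i\rceil<\sum_i(s_i+1)\le m+n$. Your construction then goes through: give agent $i$ exactly $\lceil s_i\rceil$ auxiliary tasks with shares summing to $s_i$, and put the leftover auxiliary tasks at share $0$.
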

\begin{proof}
    Given an optimal solution to the EFS program $S_i,s_i,\alpha_j$ for all $i \in \{\mathcal{N}\}$ and $j \in \{\mathcal{M}\}$,
we build a solution to the EF program by setting $S^1_i=S_i$ and set the same contract $\alpha_j$ for tasks $j\in \mathcal{M}$.

First, we assign $\lceil s_i\rceil$ tasks from $\mathcal{M}^+$ to agent $i$, $S^2_i$. Additionally, we set the contract $\alpha_j=1$ for $\lfloor s_i\rfloor$ tasks in $S^2_i$. If there remains (at most) one task contract unset, set its contract as    $s_i-\lfloor s_i\rfloor$. By this, we have that $\sum_{k \in S^2_i} \alpha_k p_{i, k }r_k = s_i$. We apply this operation to all agents $i \in \mathcal{N}$. If there still exist some tasks in $\mathcal{M}^+$ not assigned,  assign them to an arbitrary agent with contract $0$. Note that the number of total tasks assigned in $\{S_i^2\}$ is less than $|\mathcal{M}^+|$, since in any optimal EFS solution we must have $\sum_{i} s_i \le m$ implying $\sum_{i} \lceil s_i\rceil \le \sum_i (s_i + 1) \le m + n$.

    It is easy to see that the contract and allocation is EF, and the principal's revenue in the constructed instance is $\opt^1+m+n$.

    Conversely, given an optimal solution to the constructed EF  program, we can build a solution to the EFS program by taking the same allocation $S_i = S_i^1$  and setting the same contracts $\alpha_j$ for $j\in \mathcal{M}$. Additionally, set the subsidy $s_i= \sum_{k \in S^2_i} \alpha_k p_{i, k }r_k$ for agent $i$. It is easy to see that EFS constraints are satisfied, while the principal's revenue decreases by $m+n$ (as we remove the added tasks).
    Hence, we get $\opt^2-m-n$.
    This concludes the proof.
\end{proof}

Hence, by Claim \ref{claim56opt}, we have that if one can solve EF contracts optimally, then one can solve EFS optimally. This proves the proposition.
\end{proof}

\subsection{Proof of \cref{them_np}}

\begin{proof}
    We prove the \NP-hardness by reducing from the $3$-Partition problem~\citep{garey1979computers}: Given $3q$ positive integers $a_1, a_2, \dots, a_{3q}$, and a integer $B>0$, such that $\sum_{i=1}^{3q} a_i = qB$, the problem is to determine whether these integers can be partitioned into $q$ triplets $S_1, \dots,S_q$ such that for each triplet $\sum_{k \in S_i} a_k = B$ holds. The 3-partition problem remains strongly NP-complete under the restriction that every integer in S is strictly between $\frac{B}{4}$ and $\frac{B}{2}$.

    Given an instance of $3$-Partition problem with every integer $\frac{B}{4}<a_i <\frac{B}{2}$, we construct an instance of EFS contract problem: Let $h = \frac{1}{4(q+2)}$. Define $q+1$ agents, denoted as $\mathcal{N} = \{0, 1, \dots, q\}$. There are $3q+1$ tasks. For each integer $a_k$, create an item task $t_k$. Additionally, we create a special task $g$. Denote the set of tasks as $\mathcal{M}$. Let the reward of each task be $r_j = 1$, all $j\in \mathcal{M}$. Next, we define the probabilites and costs: 
    \begin{itemize}
        \item Consider the item task $t_k$. For each agent $i \in \{1, \dots, q\}$, define the probability $p_{i, t_k} = h\frac{a_k}{B} \le 1$ and cost  $c_{i, t_k} = 0$. For agent $0$, define $p_{0, t_k} = 0$ and cost  $c_{0, t_k} = 0$.
        \item Consider the special task $g$. For each agent $i \in \{1, \dots, q\}$, define the probability $p_{i, g} = 2h$ and cost  $c_{i, g} = 0$. For agent $0$, define $p_{0, g} = 1$ and cost  $c_{0, g} = \frac{1}{2}$.
    \end{itemize}

Our proof is mainly to show that the principal receives revenue at least $\frac{1}{2}$ if and only if there exists a $3$-partition to the instance. The key intuition is that in the optimal solution of the constructed instance, the subsidy terms are all {\it zero} for all the agents. Hence, the EFS contracts problem indeed is reduced to EF contracts problem, which is \NP-hard by \cite{castiglioni2025faircontracts}.  

\begin{lemma}\label[lemma]{lemma_np1}
    If there exists a $3$-partition solution to the given instance, the principal receives revenue at least $\frac{1}{2}$.
\end{lemma}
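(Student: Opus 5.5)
Given a 3-partition $S_1,\dots,S_q$ with $\sum_{k\in S_i} a_k = B$ for each $i$, we build an explicit EFS contract with all subsidies equal to zero, so it is in fact EF, and check that its revenue is at least $\frac12$.

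\textbf{The contract.} Assign the special task $g$ to agent $0$ with share $\alpha_g=\frac12$. Agent $0$'s effort constraint holds with equality, $\alpha_g p_{0,g}-c_{0,g}=\frac12-\frac12=0$, and the principal earns $(1-\frac12)\cdot 1=\frac12$ from $g$. For each $i\in\{1,\dots,q\}$, assign agent $i$ the item tasks $\{t_k : k\in S_i\}$. Give each such task the share $\alpha_{t_k}=\alpha$ for a common $\alpha\in[0,1]$ to be fixed below. Since all costs are zero, effort constraints hold for every $\alpha\ge 0$. Set $s_i=0$ for all agents. The total revenue is
\[
\tfrac12+(1-\alpha)\sum_k h\tfrac{a_k}{B}=\tfrac12+(1-\alpha)qh,
\]
which is at least $\frac12$ for any $\alpha\le 1$.

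\textbf{Envy checks.} With zero subsidies, the EFS constraints reduce to EF constraints, and we take them in three groups.

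\emph{Agent $0$.} Its utility is $0$. Every item task has $p_{0,t_k}=0$ and $c_{0,t_k}=0$, so it values any bundle $S_i$ at $\sum\max\{0,0\}=0$. Hence agent $0$ envies no one.

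\emph{Agent $i\ge1$ versus agent $j\ge1$.} Agent $i$ receives $\alpha\sum_{k\in S_i} h a_k/B=\alpha h$. Its valuation of $S_j$ uses the same $p$-values, since $p_{i,t_k}$ does not depend on $i$, so it also equals $\alpha h$. There is no envy, and this is exactly where the partition property $\sum_{k\in S_i}a_k=B$ is used.

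\emph{Agent $i\ge1$ versus agent $0$.} Agent $i$ values agent $0$'s bundle $\{g\}$ at $\max\{\alpha_g\cdot 2h-0,0\}=h$. Its own utility is $\alpha h$, so we need $\alpha h\ge h$, which forces $\alpha=1$.

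\textbf{Choice of $\alpha$.} We therefore take $\alpha=1$ on all item tasks. The principal earns nothing from them, the revenue is exactly $\frac12$, and every constraint above holds. This already proves the lemma.

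\textbf{Main obstacle.} The only delicate point is the comparison between the normal agents and agent $0$. Because $g$'s share must be at least $\frac12$ to induce agent $0$'s effort, each normal agent values $g$ at $h$. Each normal agent's own bundle must therefore be worth at least $h$ to it. With exact partition sums of $B$, the bundles reach exactly this value only under full shares, and the item tasks then contribute no revenue. The bound in the lemma is precisely the $\frac12$ earned from $g$.
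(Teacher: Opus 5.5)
Your proposal is correct and is essentially the paper's proof. Both use the same contract: $g$ goes to agent $0$ at $\alpha_g=\tfrac12$, triplet $S_i$ goes to agent $i$ at full share $\alpha_{t_k}=1$, and all subsidies are zero. Both then run the same three envy checks to get revenue exactly $\tfrac12$. The only difference is that you first carry a common share $\alpha$ and then show that $\alpha=1$ is forced, which explains why the item tasks contribute no revenue.
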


\begin{proof}
    Let the $3$-partition solution be $S_1, S_2, \dots, S_q$. For each triplet $S_i$, assign all the tasks $j\in S_i$ to agent $i \in \{1, 2, \dots, q\}$. Moreover, we set the contract for the task $j \in S_i$ as $\alpha_j = 1$. Moreover, assign the special task $g$ to agent $0$, and set the contract $\alpha_g = \frac{1}{2}$. We set all the subsidies term $S_i = 0$ for all agents $i \in \mathcal{N}$.

    Now, we verify that this constructed contracts and allocation satisfy EFS constraints. Consider any agent $i, j \neq 0$. The utility that agent $i$ receives utility from its assigned set $S_i$ is
    \[
    \sum_{k \in S_i} \alpha_k p_{i,k} - c_{i,k} = h\frac{\sum_{k \in S_i} a_k}{B} = h
    \]
    Moreover, the utility that agent $i$ can obtains from agent $j$'s assigned set $S_j$ is
    \[
    \sum_{k \in S_j} \alpha_k p_{i,k} - c_{i,k} = h\frac{\sum_{k \in S_j} a_k}{B} = h
    \]
    Hence, there is no envy among agents $i, j$. Note that the utility agent $i$ receives from agent $0$'s assigned task $g$ is $\alpha_g p_{i,g} - c_{i,g} = h$. Hence, agent $i$ does not envy agent $0$.

    Now, consider agent $0$. Its utility receives from task $g$ is $\alpha_g p_{0,g} - c_{0,g} = 0$. Moreover, since agent $0$ has sucess probabilites $0$ for all other tasks, it receives utility $0$ from all other agents' assigned sets of tasks. Hence, agent $0$ does not envy other agents.

Therefore, the revenue that the principal receives is mainly from the task $g$, which is $(1-\alpha_g)p_{0, g} = \frac{1}{2}$. This concludes the proof.
\end{proof}

Next, we present the proof for the more challenging opposite direction, where there does not exist a $3$-partition.

\begin{lemma}\label{lemma_np2}
    If there does not exist a $3$-partition solution to the given instance, the principal receives revenue strictly less than $\frac{1}{2}$.
\end{lemma}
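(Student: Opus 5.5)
We argue the contrapositive: suppose some EFS contract $(S,\alpha,s)$ in the constructed instance has revenue at least $\frac{1}{2}$. We show that the sets of item tasks assigned to agents $1,\dots,q$ form a valid $3$-partition. The argument aggregates the EFS constraints of the normal agents against agent $0$, which bounds the revenue by $\frac{1}{2}$. We then read off the consequences of this bound being tight.

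\textbf{Step 1: task $g$ goes to agent $0$.} Suppose $g$ is not assigned to agent $0$. Item task $t_k$ yields revenue at most $h\frac{a_k}{B}$, and task $g$ yields at most $2h$ if assigned to a normal agent. Item tasks given to agent $0$ yield nothing, since $p_{0,t_k}=0$. The total revenue is therefore at most
\[
hq+2h=\tfrac{1}{4}<\tfrac{1}{2}.
\]
Hence $g\in S_0$. The effort constraint for agent $0$ on $g$ forces $\alpha_g\ge \frac{1}{2}$, so $g$ contributes revenue $1-\alpha_g\le\frac{1}{2}$.

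\textbf{Step 2: aggregate bound.} Let $I$ be the set of item tasks assigned to normal agents. Since normal agents have zero cost on every task, agent $i\ge 1$ has utility
\[
u_i=\sum_{k\in S_i}\alpha_k h\frac{a_k}{B},
\]
and it values $S_0\ni g$ at least $2h\alpha_g$. The EFS constraint of $i$ against agent $0$ gives
\[
u_i+s_i\ge 2h\alpha_g+s_0 .
\]
Summing over $i=1,\dots,q$ and using $\sum_i u_i=\sum_{k\in I}\alpha_k h a_k/B$, we bound the item revenue:
\[
R_I=\sum_{k\in I}(1-\alpha_k)h\frac{a_k}{B}\le h\frac{\sum_{k\in I}a_k}{B}-2hq\alpha_g+\sum_{i\ge1}s_i-qs_0 .
\]
Write $x=\alpha_g-\frac{1}{2}\ge 0$ and use $\sum_{k\in I}a_k\le qB$. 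The total revenue $(1-\alpha_g)+R_I-\sum_{i\ge 1}s_i-s_0$ is then at most
\[
\tfrac{1}{2}-x-2hqx-(q+1)s_0-\tfrac{h}{B}\Big(qB-\sum_{k\in I}a_k\Big).
\]
Revenue at least $\frac{1}{2}$ therefore forces every slack to vanish. Concretely, $\alpha_g=\frac{1}{2}$, $s_0=0$, and $\sum_{k\in I}a_k=qB$; the last condition means every item task lies in $I$, because all $a_k>0$. Moreover every summed constraint must be tight, so $u_i+s_i=h$ for each $i\ge1$.

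\textbf{Step 3: the subsidies vanish and the partition follows.} Agent $0$ values every item bundle at $0$. Its EFS constraint against agent $j$ reads $0+s_0\ge s_j$, so $s_j\le s_0=0$ for all $j$. Hence $u_i=h$, that is,
\[
\sum_{k\in S_i}\alpha_k a_k=B\quad\text{for all } i\ge 1 .
\]
Summing over $i$ gives $\sum_k\alpha_k a_k=qB=\sum_k a_k$. Since $\alpha_k\le 1$ and $a_k>0$, every $\alpha_k=1$, so $\sum_{k\in S_i}a_k=B$ for each normal agent. Finally, $\frac{B}{4}<a_k<\frac{B}{2}$ implies each such set has exactly three elements. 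This yields a $3$-partition, contradicting the hypothesis, so the revenue is strictly below $\frac{1}{2}$.

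The main obstacle is Step 2. One has to pick the right combination of constraints, namely the normal agents' envy toward agent $0$, so that the subsidies $s_i$ cancel exactly against the revenue. The $s_0$ term must also appear with a negative coefficient so that the tightness conditions can be extracted. Agent $0$'s own envy constraints, used in Step 3, then kill the remaining subsidies.
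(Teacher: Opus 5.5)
Your proof is correct and follows essentially the same route as the paper. You force $g$ to agent $0$, sum the normal agents' EFS constraints against agent $0$ to cap revenue at $\frac{1}{2}$, read off $\alpha_g=\frac{1}{2}$ from tightness, use agent $0$'s EFS constraints, and conclude $\alpha_k=1$ and $\sum_{k\in S_i}a_k=B$. The only difference is bookkeeping: you write revenue as $\frac{1}{2}$ minus explicit nonnegative slacks, so you get directly that every item task goes to a normal agent and that all subsidies vanish. This lets you skip the paper's exchange argument (its claim that an optimal contract has $S_0=\{g\}$), and your conclusion holds for every EFS contract with revenue at least $\frac{1}{2}$, not only an optimal one.
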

\begin{proof}
    We prove this lemma by contradiction. Suppose that the revenue that the principal receives is at least $\frac{1}{2}$. In this case, we note that the task $g$ must be assigned to agent $0$. Otherwise, the revenue that the principal can receives is at most
    \begin{align*}
    \sum_{k \in \mathcal{M}} p_{i,k} &= \sum_{k \in [3q]} p_{i,k} + p_{i, g} \\
    &= 2h + h\frac{\sum_{k \in [3q]} a_k}{B} = qh+2h = \frac{1}{4} < \frac{1}{2}
    \end{align*}
    which is obtained by assigning all the tasks to an arbitrary agent $i\neq 0$. Hence, the task $g$ must be assigned to agent $0$.

    Let $\alpha_g$ be the contract for task $g$. By effort constraints of agent $0$, we know that we must have $\alpha_g \ge \frac{1}{2}$. Let the feasible solution to the EFS contracts be $S_0, S_1, \dots, S_q$, where $g \in S_0$. Moreover, let the subsidies term be $s_0, s_1, \dots, s_q$. Denote $U_{i}(S):= \sum_{k\in S} \max\{ \alpha_k p_{i,k} - c_{i,k}, 0 \}$ as agent $i$'s utility from exerting efforts on set $S$. 
    Hence, we have $U_i(S_i)$ be agent $i$'s utility of the feasible solution. Then, we have the sum of the utilities of all the agent $i\neq 0$ as
    \begin{equation}\label{upp_tot_ul}
    \begin{aligned}
    T:&=\sum_{i\in \mathcal{N}\setminus\{0\}} U_{i}(S_i) = \sum_{i\in \mathcal{N}\setminus\{0\}} \sum_{k \in S_i } \alpha_k p_{i,k} \\
    &=  \sum_{k \in \mathcal{M}\setminus S_0 } \alpha_k h\frac{a_k}{B} \le \sum_{k \in \mathcal{M}\setminus \{g\} } h\frac{a_k}{B} = qh
    \end{aligned}
   \end{equation}
    Now, consider agent $i\neq 0$ EFS constraints which compares with the set of tasks of agent $0$, we have 
    \begin{equation}\label{equtilityEFS}
    U_{i}(S_i) +s_i \ge U_{i}(S_0) +s_0 \ge \alpha_g p_{ig} +s_0 
    \end{equation}
    Hence, we have the total subsidies of all agents $i\neq 0$ by summing up (\ref{equtilityEFS}) as 
    \begin{equation}\label{lb_sub}
    \sum_{i \in \mathcal{N}\setminus\{0\}} s_i \ge 2q\alpha_g h+qs_0 -\sum_{i\in \mathcal{N}\setminus\{0\}} U_{i}(S_i) \ge \alpha_g 2qh -T
    \end{equation}
    Moreover, consider agent $0$'s EFS constraintscompares with the set of tasks of agent $i$, we have 
    \begin{equation}\label{eqs0s1}
    \sum_{j\in S_0} \alpha_j p_{0,j} -c_{0,j} + s_0= \alpha_g -\frac{1}{2} +s_0 \ge 0+s_i = s_i
    \end{equation}
    \begin{claim} \label[claim]{claim_g}
        The optimal EFS contract has $S_0=\{g\}$.
    \end{claim}
        \begin{proof}
        If there exists one item task assigned to agent $0$ in an optimal EFS contract, one can construct an EFS solution that has the weakly higher principal's revenue by assigning that task to an arbitrary agent $i\neq 0$ with contract $0$. This is due to that agent $0$ receives utility $0$ from any item task by construction. Hence, assigning it to any agent $i\neq 0$ does not cause agent $0$'s envy. Moreover, it does not cause envy among any agents $i, j\neq 0$, since the contract for that task is $0$, leading to {\it zero} utility to agent $i\neq 0$. Finally, it decreases the utility that agent $i$ obtains from agent $0$'s bundle, further strengthening the EFS constraints. Hence, the new solution is EFS. Finally, since the principal gains zero revenue from this task in the original but gains positive revenue in the modified solution, the principal's revenue strictly improves, contradicting the optimality.
    \end{proof}

    Hence, we have the principal's revenue as 
    \begin{align*}
    &\sum_{i \in \mathcal{N}} \sum_{j\in S_i} (1-\alpha_j) p_{i,j} - \sum_{i \in \mathcal{N}} s_i \\&\le (1-\alpha_g) + \sum_{j \in \mathcal{M}\setminus\{g\}} h\frac{a_j}{B} - T  - \sum_{i \in \mathcal{N}} s_i\\
    &\le (1-\alpha_g) + \sum_{j \in \mathcal{M}\setminus\{g\}} h\frac{a_j}{B} - T  - (\alpha_g q2h -T) \\
    & = 1-\alpha_g (1+2qh) + qh
    \end{align*}
    where the first inequality is by Claim~\ref{claim_g} and the fact that the principal's revenue from item tasks is the welfare minus the agents' utility from item tasks i.e., $T$, and the second inequality is by $s_0\ge 0$ and Equation (\ref{lb_sub}). 

    Since $\alpha_g\ge \frac{1}{2}$, we have that the principal's revenue is at most $1-\alpha_g (1+2qh) + qh\le \frac{1}{2}$. Together with our assumption that the principal receives revenue at least $\frac{1}{2}$. Hence, the principal's revenue must be $\frac{1}{2}$, and therefore, $\alpha_g=\frac{1}{2}$ must hold.

    By (\ref{equtilityEFS}), we have 
    $U_{i}(S_i) +s_i \ge h +s_0,$
    and by (\ref{eqs0s1}), we have $s_0\ge s_i$,
    which together implies that 
\begin{equation}\label{low_utility}
U_i(S_i) \ge h
\end{equation}
By summing up over all the agent $i\neq 0$, we have 
\[
\sum_{i} U_i(S_i) \ge qh
\]
Then, by (\ref{upp_tot_ul}), we have $\sum_{i} U_i(S_i) = qh$, implying that (\ref{low_utility}) holds with equality and $\alpha_k =1$ for all the tasks except the special task $g$.

Note that $U_i(S_i) = \sum_{k \in S_i} \alpha_k h \frac{a_k}{B}$. This implies for each agent $i\neq 0$, we have $\sum_{k\in S_i} a_k =B$. Since each integer $\frac{B}{4}<a_i<\frac{B}{2}$, it implies that every set $S_i$ contains exactly $3$ tasks, implying a $3$-partition. Hence, we reach a contradiction to the assumption, concluding the proof.
\end{proof}

The proof concludes by Lemma~\ref{lemma_np1} and Lemma~\ref{lemma_np2}.
\end{proof}

\subsection{Proof of \cref{proposition_polytime_constant}}
\begin{proof}
    Note that since the number of tasks is a constant, we know that the number of allocations of assigned sets of tasks $(S_1, S_2,\dots, S_n)$ is polynomial $O(n^m)$. Observing that when fixing the allocations $(S_1, S_2,\dots, S_n)$, the program $\eqref{efs_program}$ immediately reduces to linear program,
    \begin{align*}
     \max_{\alpha, s} \quad &  \rev(S,\alpha,s) \\
    \text{s.t.} \quad &  \sum_{k \in S_i} \alpha_k p_{i, k }r_k - c_{i,k} + s_i \\
    & \qquad \qquad \qquad  \geq \sum_{k \in S_j} z_{i,k} + s_j,  ~~ \forall j \neq i, i,j\in \mathcal{N} \\
    & z_{i,k} \ge \alpha_k p_{i, k} r_k - c_{i, k},~ z_{i,k}\ge 0, \quad \forall k\in S_j i,j \in\mathcal{N} \\
    &\alpha_j p_{i,j} r_j - c_{i,j}\ge 0, \quad \forall i \in \mathcal{N}, j\in S_j \\
     &s_i\ge 0, \quad  \forall i\in \mathcal{N} 
\end{align*}
where we introduce a new variable $z_{i,j} \ge 0$ to linearize $\max\{\alpha_k p_{i,k}r_k -c_{i,k}\}$ for the $\eqref{efsconstraints}$. Since linear programs can be solved in polynomial time and only polynomially many allocations must be enumerated, the complexity for computing optimal EFS contracts is polynomial.
\end{proof}

\end{document}